\newcommand{\removed}[1]{}
\newcommand{\co}{\mathcal{O}}
\title{Passively Mobile Communicating Machines that Use Restricted Space\thanks{This work has been partially supported by the ICT Programme of the European Union under contract number ICT-2008-215270 (\textsf{FRONTS}).}}
\author{Ioannis Chatzigiannakis\inst{1,2} \and Othon Michail\inst{1,2} \and Stavros Nikolaou\inst{2} \and \\ Andreas Pavlogiannis\inst{2} \and Paul G. Spirakis\inst{1,2}}
\institute{Research Academic Computer Technology Institute (RACTI), Patras, Greece\and Computer Engineering and Informatics Department (CEID), University of Patras, Patras, Greece.\\
Email:\email{ \{ichatz, michailo, spirakis\}@cti.gr, \\ snikolaou@ceid.upatras.gr, paulogiann@ceid.upatras.gr}}
\titlerunning{Passively Mobile Communicating Machines that Use Restricted Space}
\begin{document}

\maketitle

\begin{abstract}
We propose a new theoretical model for passively mobile Wireless Sensor Networks, called $PM$ (standing for \emph{Passively mobile Machines}). The main modification w.r.t. the Population Protocol model \cite{AADFP06} is that agents now, instead of being automata, are Turing Machines. We provide general definitions for unbounded memories, but we are mainly interested in computations upper-bounded by $\log n$, with $n$ being the size of the population (a space bound that clearly keeps our devices tiny). However, our results easily generalize to any space bound which is at least $\log n$. We focus on \emph{complete communication graphs} and define the complexity class $PMSPACE(f(n))$ consisting of all predicates on input assignments that are stably computable by some PM protocol that uses $\mathcal{O}(f(n))$ memory. We assume that the agents are initially \emph{identical} and have \emph{no global knowledge} of the system, which together with the completeness of the communication graph implies that all computable predicates have to be \emph{symmetric}. We show that there exists a PM protocol that \emph{can assign unique consecutive ids to the agents and inform them of the population size}, only by using $\mathcal{O}(\log n)$ memory. This allows us to give a direct simulation of any \emph{Deterministic} Turing Machine of $\mathcal{O}(n\log n)$ space by PM protocols using $\mathcal{O}(\log n)$ space. Going one step further, we generalize the simulation of the deterministic TM to prove that the PM model can simulate any \emph{Nondeterministic} TM of $\mathcal{O}(n\log n)$ space under the same space bounds. Moreover, by showing that a Nondeterministic TM of $\mathcal{O}(n\log n)$ space decides any language in $PMSPACE(\log n)$, we end up with an exact characterization for $PMSPACE(\log n)$: \emph{it is precisely the class of all symmetric predicates in} $NSPACE(n\log n)$. All these results easily generalize to the following exact characterizations for the classes $PMSPACE(f(n))$, for all $f(n)=\Omega(\log n)$: \emph{they are precisely the classes of all symmetric predicates in} $NSPACE(nf(n))$. In this way, we provide a space hierarchy for the PM model when the memory bounds are $\Omega(\log n)$. Finally, we explore the computability of the PM model when the protocols use $o(\log\log n)$ space and prove that $SEMILINEAR=PMSPACE(f(n))$ when $f(n)=o(\log\log n)$, where $SEMILINEAR$ denotes the class of the semilinear predicates. In fact, we prove that this bound is tight, so that $SEMILINEAR \subsetneq PMSPACE(f(n))$ when $f(n)=\mathcal{O}(\log \log n)$. 
\end{abstract}

\section{Introduction}

Theoretical models for Wireless Sensor Networks have received great attention over the past few years. Recently, Angluin \emph{et al.} \cite{AADFP06} proposed the \emph{Population Protocol} (\emph{PP}) model. Their aim was to model sensor networks consisting of tiny computational devices with sensing capabilities that follow some unpredictable and uncontrollable mobility pattern. Due to the minimalistic nature of their model, the class of computable predicates was proven to be fairly small: it is the class of \emph{semilinear predicates}, thus, not including multiplication of variables, exponentiations, and many other important operations on input variables. Moreover, Delporte-Gallet \emph{et al.} \cite{DFGR06} showed that PPs can tolerate only $\mathcal{O}(1)$ crash failures and not even a single Byzantine agent.

The work of Angluin \emph{et al.} shed light and opened the way towards a brand new and very promising direction. The lack of control over the interaction pattern, as well as its inherent nondeterminism, gave rise to a variety of new theoretical models for WSNs. Those models draw most of their beauty precisely from their inability to organize interactions in a convenient and predetermined way. In fact, the Population Protocol model was the minimalistic starting-point of this area of research. Most efforts are now towards strengthening the model of Angluin \emph{et al.} with extra realistic and implementable assumptions, in order to gain more computational power and/or speed-up the time to convergence and/or improve fault-tolerance \cite{CMS09-2}, \cite{GR09}.

In this work, we think of each agent as being a Turing Machine. In particular, we propose a new theoretical model for passively mobile sensor networks, called the \emph{PM} model. It is a model of Passively mobile Machines (that we keep calling agents) with sensing capabilities, equipped with two-way communication. We focus on PM protocols that use $\mathcal{O}(\log n)$ memory since, having \emph{logarithmic communicating machines} seems to be more natural than communicating automata of constant memory. First of all, the \emph{communicating machines} assumption is perfectly consistent with current technology (cellphones, iPods, PDAs, and so on). Moreover, \emph{logarithmic} is, in fact, \emph{extremely small}. In addition, we explore the computability of the PM model on different space bounds in order to get an insight of the trade-off between computational power and resource (memory) availability. As will shall see, in PM protocols that use $f(n) = \Omega(\log n)$ space, agents can be organized into a distributed nondeterministic TM that makes use of all the available space. In the case, where $f(n) = o(\log\log n)$ however, we show that the PM protocols are computationally equal to Population Protocols. Thus, we provide exact characterizations for the input symmetric computations performed by communicating TMs using the above space bounds.


\subsection{Other Previous Work}

In \cite{AADFP06}, the \emph{Probabilistic Population Protocol} model was proposed, in which the scheduler selects randomly and uniformly the next pair to interact. Some recent work has concentrated on performance, supported by this random scheduling assumption (see e.g. \cite{AAE08}). \cite{CDFMS09} proposed a generic definition of probabilistic schedulers and a collection of new fair schedulers, and revealed the need for the protocols to adapt when natural modifications of the mobility pattern occur. \cite{BCCGK09,CS08} considered a huge population hypothesis (population going to infinity), and studied the dynamics, stability and computational power of probabilistic population protocols by exploiting the tools of continuous nonlinear dynamics.

In addition, several extensions of the basic model have been proposed in order to more accurately reflect the requirements of practical systems. The \emph{Mediated Population Protocol} (\emph{MPP}) model of \cite{CMS09-3} was based on the assumption that each edge of the communication graph can store a state. It has been recently proven \cite{CMNPS10} that in the case of complete graphs the corresponding class is the symmetric subclass of $NSPACE(n^2)$, rendering the MPP model extremely powerful. Guerraoui and Ruppert \cite{GR09} made another natural assumption: each agent has its own unique id and can store up to a constant number of other agents' ids. In this model, which they named \emph{Community Protocol} model, the only permitted operation on ids is comparison. It was proven that the corresponding class consists of all symmetric predicates in $NSPACE(n\log n)$. In \cite{AACFJP05}, Angluin \emph{et al.} studied what properties of restricted communication graphs are stably computable, gave protocols for some of them, and proposed an extension of the model with \textit{stabilizing inputs}. In \cite{CMS10-2}, MPP's ability to decide graph properties was studied and it was proven that connectivity is undecidable. Some other works incorporated agent failures \cite{DFGR06} and gave to some agents slightly increased computational power \cite{BCMRR07} (heterogeneous systems). Recently, Bournez \emph{et al.} \cite{BCCK08} investigated the possibility of studying population protocols via game-theoretic approaches. For an excellent introduction to the subject of population protocols see \cite{AR07} and for some recent advances mainly concerning mediated population protocols see \cite{CMS09-4}.

\section{Our Results - Roadmap}

In Section \ref{sec:mod}, we begin with a formal definition of the PM model. The section proceeds with a thorough description of the systems' functionality and then provides definitions of \emph{configurations} and \emph{fair executions}. In Section \ref{sec:pred}, first \emph{stable computation} and then the complexity classes $SSPACE(f(n))$, $SNSPACE(f(n))$ (symmetric predicates in $SPACE(f(n))$ and $NSPACE(f($ $n))$, respectively), and $PMSPACE(f(n))$ (stably computable predicates by the PM model using $\mathcal{O}(f(n))$ space) are defined, while in section \ref{exam} we give two examples of PM protocols. In Section \ref{sec:uids}, we prove that PM protocols can assume the existence of unique consecutive ids and knowledge of the population size at the space cost of $\mathcal{O}(\log n)$ (Theorem \ref{the:iplm}). In section \ref{plm} we show how to exploit this knowledge in order for the system to simulate a non-deterministic TM (Theorem \ref{the:lowPLM}). This, along with Theorem \ref{the:plmup} provide an exact characterization of the PLM: $PLM=SNSPACE(nf(n))$ (Theorem \ref{plm_exact}). Based on the results of this section, we establish a space hierarchy theorem for the PM model, when the corresponding protocols use $\mathcal{O}(\log n)$ space (Theorem \ref{the:pmsh}). In section \ref{threshold} we examine the interesting case of the $o(\log log n)$ space bounded protocols, showing that it acts as a computability threshold of some sort. Finally, in Section \ref{sec:conc} we conclude and discuss some future research directions.

\section{The Model} \label{sec:mod}

In this section, we formally define the PM model and describe its functionality. In what follows, we denote by $G=(V,E)$ the (directed) communication graph: $V$ is the set of agents, or \emph{population}, and $E$ is the set of permissible ordered pairwise interactions between these agents. We provide definitions for general communication graphs and unbounded memories, although in this work we deal with complete communication graphs only and are mainly interested in computations that are space-bounded by a logarithm of the population size. We generally denote by $n$ the population size (i.e. $n\equiv |V|$).

\begin{definition}
A \emph{PM} protocol is a 6-tuple $(X,\Gamma,Q,\delta,\gamma,q_0)$ where $X$, $\Gamma$ and $Q$ are all finite sets and
\begin{enumerate}
\item $X$ is the \emph{input alphabet}, where $\sqcup\notin X$,
\item $\Gamma$ is the \emph{tape alphabet}, where $\sqcup\in \Gamma$ and $X\subset \Gamma$,
\item $Q$ is the set of \emph{states},
\item $\delta:Q\times\Gamma^4\rightarrow Q\times\Gamma^4\times\{L,R\}^4\times\{0,1\}$ is the \emph{internal transition function},
\item $\gamma:Q\times Q\rightarrow Q\times Q$ is the \emph{external transition function} (or \emph{interaction transition function}), and
\item $q_0\in Q$ is the \emph{initial state}.
\end{enumerate}
\end{definition}

Each agent is equipped with the following:
\begin{itemize}
\item A \emph{sensor} in order to sense its environment and receive a piece of the input.
\item Four read/write \emph{tapes}: the \emph{working tape}, the \emph{output tape}, the \emph{incoming message tape} and the \emph{outgoing message tape}. We assume that all tapes are bounded to the left and unbounded to the right.
\item A \emph{control unit} that contains the state of the agent and applies the transition functions.
\item Four \emph{heads} (one for each tape) that read from and write to the cells of the corresponding tapes and can move one step at a time, either to the left or to the right.
\item A binary \emph{working flag} either set to $1$ meaning that the agent is \emph{working} internally or to $0$ meaning that the agent is \emph{ready} for interaction.
\end{itemize}

Initially, all agents are in state $q_0$, their working flag is set to $1$, and all their cells contain the \emph{blank symbol} $\sqcup$. We assume that all agents concurrently receive their sensed input (different agents may sense different data) as a response to a global start signal. The input to each agent is a symbol from $X$ and is written on the leftmost cell of its working tape.

When its working flag is set to 1 we can think of an agent working as a usual multitape Turing Machine (but it additionally writes the working flag). In particular, while the working flag is set to 1 the internal transition function $\delta$ is applied, the control unit reads the symbols under the heads and its own state, updates all of them, moves each head one step to the left or to the right, and sets the working flag to 0 or 1, according to $\delta$.

As it is common in the population protocol literature, a \emph{fair adversary scheduler} selects ordered pairs of agents (edges from $E$) to interact. Assume now that two agents $u$ and $\upsilon$ are about to interact with $u$ being the \emph{initiator} of the interaction and $\upsilon$ being the \emph{responder}. Let $f:V\rightarrow \{0,1\}$ be a function returning the current value of each agent's working flag. If at least one of $f(u)$ and $f(\upsilon)$ is equal to $1$, then nothing happens, because at least one agent is still working internally. Otherwise ($f(u)=f(\upsilon)=0$), both agents are ready and an \emph{interaction} is established. In the latter case, the external transition function $\gamma$ is applied, the states of the agents are updated accordingly, the outgoing message of the initiator is copied to the leftmost cells of the incoming message tape of the responder (replacing its contents and writting $\sqcup$ to all other previously non-blank cells) and vice versa (we call this the \emph{message swap}), and finally the working flags of both agents are again set to 1. These operations could be handled by the protocols themselves, but then protocol descriptions would become awkward. So, we simply think of them as automatic operations performed by the hardware. These operations are also considered as atomic, that is, the interacting agents cannot take part in another interaction before the completion of these operations and, moreover, either all operations totally succeed or are all totally aborted (in which case, the states of the interacting agents are restored).


Note that the assumption that the internal transition function $\delta$ is only applied when the working flag is set to 1 is weak. In fact, an equivalent way to model this is to assume that $\delta$ is of the form $\delta:Q\times\Gamma^4\times\{0,1\}\rightarrow Q\times\Gamma^4\times\{L,R,S\}^4\times\{0,1\}$, that it is always applied, and that for all $q\in Q$ and $a\in\Gamma^4$, $\delta(q,a,0)=(q,a,S^4,0)$ is satisfied, where $S$ means that the corresponding head ``stays put''. The same holds for the assumptions that $\gamma$ is not applied if at least one of the interacting agents is working internally and that the working flags are set to $1$ when some established interaction comes to an end; it is equivalent to an extended $\gamma$ of the form $\gamma:Q^2\times \{0,1\}^2\rightarrow Q^2\times  \{0,1\}^2$, that is applied in every interaction, and for which $\gamma(q_1,q_2,f_1,f_2)=(q_1,q_2,f_1,f_2)$ if $f_1=1$ or $f_2=1$, and $\gamma(q_1,q_2,f_1,f_2)=(\gamma_1(q_1,q_2),\gamma_2(q_1,q_2),1,1)$ if $f_1=f_2=0$, hold for all $q_1,q_2\in Q$, and we could also further extend $\delta$ and $\gamma$ to handle the exchange of messages, but for sake of simplicity we have decided to leave such details out of the model.

Since each agent is a TM, we use the notion of a configuration to capture its ``state''. An \emph{agent configuration} is a tuple $(q,l_w,r_w,l_o,r_o,l_{im},r_{im},l_{om},r_{om},f)$, where $q\in Q$, $l_i,r_i\in \Gamma^{*}$, and $f\in \{0,1\}$. $q$ is the state of the control unit, $l_w$ ($l_o, l_{im},l_{om}$) is the string of the working (output, incoming message, outgoing message) tape to the left of the head (including the symbol scanned), $r_w$ ($r_o, r_{im}, r_{om}$) is the string of the working (output, incoming message, outgoing message) tape to the right of the head  (excluding infinite sequences of blank cells), and $f$ is the working flag indicating whether the agent is ready to interact ($f=0$) or carrying out some internal computation ($f=1$). Let $\mathcal{B}$ be the set of all agent configurations. Given two agent configurations $A,A^{\prime}\in \mathcal{B}$, we say that $A$ \emph{yields} $A^{\prime}$ if $A^{\prime}$ follows $A$ by a single application of $\delta$.

A \emph{population configuration} is a mapping $C:V\rightarrow \mathcal{B}$, specifying the agent configuration of each agent in the population. Let $C$, $C^{\prime}$ be population configurations and let $u\in V$. We say that $C$ \emph{yields} $C^{\prime}$ via \emph{agent transition} $u$, denoted $C \stackrel{u}\rightarrow C^{\prime}$, if  $C(u)$ yields $C^{\prime}(u)$ and $C^{\prime}(w)=C(w)$, $\forall w\in V-\{u\}$.

Denote by $q(A)$ the state component of an agent configuration $A$ and similarly for the other components (e.g. $l_w(A)$, $r_{im}(A)$, $f(A)$, and so on). Let $s_{tp}(A)=l_{tp}(A)r_{tp}(A)$, that is, we obtain by concatenation the whole contents of tape $tp\in \{w,o,im,om\}$. Given a string $s$ and $1\leq i,j\leq |s|$ denote by $s[\ldots i]$ its prefix $s_1s_2\cdots s_i$ and by $s[j\ldots]$ its suffix $s_js_{j+1}\cdots s_{|s|}$. If $i,j>|s|$ then $s[\ldots i]=s\sqcup^{i-|s|}$ (i.e. $i-|s|$ blank symbols appended to $s$) and $s[j\ldots]=\varepsilon$. For any external transition $\gamma(q_1,q_2)=(q_1^{\prime},q_2^{\prime})$ define $\gamma_1(q_1,q_2)=q_1^{\prime}$ and $\gamma_2(q_1,q_2)=q_2^{\prime}$. Given two population configurations $C$ and $C^{\prime}$, we say that $C$ \emph{yields} $C^{\prime}$ via \emph{encounter} $e=(u,\upsilon)\in E$, denoted $C\stackrel{e}\rightarrow C^{\prime}$, if one of the following two cases holds:\\

\noindent Case 1 (only for this case, we define $C^u\equiv C(u)$ to avoid excessive number of parentheses):
\begin{itemize}
\item $f(C(u))=f(C(\upsilon))=0$, which guarantees that both agents $u$ and $\upsilon$ are ready for interaction under the population configuration $C$.
\item $C^{\prime}(u) = (\gamma_1(q(C^u),q(C^\upsilon)),l_w(C^u), r_w(C^u),l_o(C^u), r_o(C^u), s_{om}(C^\upsilon)[\ldots |l_{im}(C^u)|],$\\ \hspace*{14.5mm} $s_{om}(C^\upsilon)[|l_{im}(C^u)|+1\ldots], l_{om}(C^u), r_{om}(C^u), 1)$,
\item $C^{\prime}(\upsilon) = (\gamma_2(q(C^u),q(C^\upsilon)),l_w(C^\upsilon), r_w(C^\upsilon),l_o(C^\upsilon), r_o(C^\upsilon), s_{om}(C^u)[\ldots |l_{im}(C^\upsilon)|],$\\ \hspace*{14.5mm} $s_{om}(C^u)[|l_{im}(C^\upsilon)|+1\ldots], l_{om}(C^\upsilon), r_{om}(C^\upsilon), 1)$, and
\item $C^{\prime}(w)=C(w)$, $\forall w\in V-\{u,\upsilon\}$.
\end{itemize}

\noindent Case 2:
\begin{itemize}
\item $f(C(u))=1$ or $f(C(\upsilon))=1$, which means that at least one agent between $u$ and $\upsilon$ is working internally under the population configuration $C$, and
\item $C^{\prime}(w) = C(w)$, $\forall w\in V$. In this case no effective interaction takes place, thus the population configuration remains the same.
\end{itemize}

Generally, we say that $C$ \emph{yields} (or \emph{can go in one step to}) $C^{\prime}$, and write $C\rightarrow C^{\prime}$, if $C\stackrel{e}\rightarrow C^{\prime}$ for some $e\in E$ (via encounter) or $C\stackrel{u}\rightarrow C^{\prime}$ for some $u\in V$ (via agent transition), or both. We say that $C^{\prime}$ is \emph{reachable} from $C$, and write $C\stackrel{*}\rightarrow C^{\prime}$ if there is a sequence of population configurations $C=C_0,C_1,\ldots,C_t=C^{\prime}$ such that $C_i\rightarrow C_{i+1}$ holds for all $i\in \{0,1,\ldots,t-1\}$. An \emph{execution} is a finite or infinite sequence of population configurations $C_0,C_1\dots$, so that $C_i\rightarrow C_{i+1}$. An infinite execution is \emph{fair} if for all population configurations $C$, $C^{\prime}$ such that $C\rightarrow C^{\prime}$, if $C$ appears infinitely often then so does $C^{\prime}$. A \emph{computation} is an infinite fair execution.

We assume that the input alphabet $X$, the tape alphabet $\Gamma$, and the set of states $Q$ are all sets whose cardinality is fixed and independent of the population size (i.e. all of them are of cardinality $\mathcal{O}(1)$). Thus, protocol descriptions have also no dependence on the population size and the PM model \emph{preserves uniformity}. Moreover, PM protocols are \emph{anonymous}, since there is no room in the state of the agents for unique identifiers (though here there is plenty of room on the tapes to create such ids). Uniformity and anonymity are two outstanding properties of population protocols \cite{AADFP06}.

\section{Stably Computable Predicates} \label{sec:pred}

The predicates that we consider here are of the following form. The input (also called an \emph{input assignment}) to the population is any string $x=\sigma_1\sigma_2\cdots \sigma_n\in X^*$, with $n$ being the size of the population. In particular, by assuming an ordering over $V$, the input to agent $i$ is the symbol $\sigma_i$, $1\leq i\leq n$. \footnote{For simplicity and w.l.o.g. we have made the assumption that the input to each agent is a single symbol but our definitions can with little effort be modified to allow the agents receive whole strings as input.} Any mapping $p: X^*\rightarrow \{0,1\}$ is a \emph{predicate on input assignments}. 

\begin{definition}
A predicate on input assignments $p$ is called \emph{symmetric} if for every $x\in X^*$ and any $x^{\prime}$ which is a permutation of $x$'s symbols, it holds that $p(x)=p(x^{\prime})$
\end{definition}

In words, permuting the input symbols does not affect the symmetric predicate's outcome. From each predicate $p$ a language $L_p$ is derived that is the set of all strings that make $p$ true or equivalently, $L_p=\{x\in X^*\;|\; p(x)=1\}$. If a predicate $p$ is symmetric, $L_p$ is a symmetric language, that is, \emph{for each input string $x \in L_p$ any permutation of $x$'s symbols $x^{\prime}$ also belongs in $L_p$}.

A population configuration $C$ is called \emph{output stable} if for every configuration $C^{\prime}$ that is reachable from $C$ it holds that $O(C^{\prime}(u))=O(C(u))$ for all $u\in V$, where $O(C(u))\equiv s_o(C(u))$. In simple words, no agent changes its output in any subsequent step and no matter how the computation proceeds. A predicate on input assignments $p$ is said to be \emph{stably computable} by a PM protocol $\mathcal{A}$ in a graph family $\mathcal{U}$ if, for any input assignment $x\in X^*$, any computation of $\mathcal{A}$, on any communication graph from $\mathcal{U}$ of order $|x|$, contains an output stable configuration in which all agents have $p(x)$ written on their output tape. In what follows, we always assume that the graph family under consideration contains only complete communication graphs.

We say that a predicate $p$ over $X^*$ belongs to $SPACE(f(n))$ ($NSPACE(f(n))$) if there exists some deterministic (nondeterministic, resp.) TM that decides $L_p$ using $\mathcal{O}(f(n))$ space.

\begin{definition}
Let $SSPACE(f(n))$ and $SNSPACE(f(n))$ be $SPACE(f(n))$'s and $NSPACE(f(n))$'s restrictions to symmetric predicates, respectively.
\end{definition}

\begin{definition}
Let $PMSPACE(f(n))$ be the class of all predicates that are stably computable by some PM protocol that uses $\mathcal{O}(f(n))$ space in every agent (and in all of its tapes).
\end{definition}

\begin{remark} \label{rem:sym}
All agents are identical and do not initially have unique ids, thus, stably computable predicates by the PM model on complete communication graphs have to be symmetric.
\end{remark}

In fact, although we provide general definitions, we are mainly interested in the class $PLM \equiv PMSPA\-CE(\log n)$ and we call a PM protocol a \emph{PALOMA} protocol (standing for PAssively mobile LOgarithmic space MAchines) if it always uses $\mathcal{O}(\log n)$ space. Our main result in this paper is the following exact characterization for $PLM$: $PLM$ is equal to $SNSPACE(n\log n)$. In fact, the proof we give easily generalizes and also provides an exact characterization for $PMSPACE(f(n))$, when $f(n)=\Omega(\log n)$: $PMSPACE(f(n))$ is equal to $SNSPACE(nf(n))$.

\section{Two Examples \& a First Inclusion for $PLM$} \label{exam}

\subsection{Multiplication of Variables} \label{app:mult1}

We present now a PM protocol that stably computes the predicate $(N_c=N_a\cdot N_{b})$ using $\mathcal{O}(\log n)$ space (on the complete communication graph of $n$ nodes) that is, all agents eventually decide whether the number of $c$s in the input assignment is the product of the number of $a$s and the number of $b$s. We give a high-level description of the protocol.

Initially, all agents have one of $a$, $b$ and $c$ written on the first cell of their working memory (according to their sensed value). That is, the set of input symbols is $X=\Sigma=\{a,b,c\}$. Each agent that receives input $a$ goes to state $a$ and becomes ready for interaction (sets its working flag to 0). Agents in state $a$ and $b$ both do nothing when interacting with agents in state $a$ and agents in state $b$. An agent in $c$ initially creates in its working memory three binary counters, the $a$-counter that counts the number of $a$s, the $b$-counter, and the $c$-counter, initializes the $a$ and $b$ counters to 0, the $c$-counter to 1, and becomes ready. When an agent in state $a$ interacts with an agent in state $c$, $a$ becomes $\bar{a}$ to indicate that the agent is now sleeping, and $c$ does the following (in fact, we assume that $c$ goes to a special state $c_{a}$ in which it knows that it has seen an $a$, and that all the following are done internally, after the interaction; finally the agent restores its state to $c$ and becomes again ready for interaction): it increases its $a$-counter by one (in binary), multiplies its $a$ and $b$ counters, which can be done in binary in logarithmic space (binary multiplication is in $LOGSPACE$), compares the result with the $c$-counter, copies the result of the comparison to its output tape, that is, 1 if they are equal and 0 otherwise, and finally it copies the comparison result and its three counters to the outgoing message tape and becomes ready for interaction. Similar things happen when a $b$ meets a $c$ (interchange the roles of $a$ and $b$ in the above discussion). When a $c$ meets a $c$, the responder becomes $\bar{c}$ and copies to its output tape the output bit contained in the initiator's message. The initiator remains to $c$, adds the $a$-counter contained in the responder's message to its $a$-counter, the $b$ and $c$ counters of the message to its $b$ and $c$ counters, respectively, multiplies again the updated $a$ and $b$ counters, compares the result to its updated $c$ counter, stores the comparison result to its output and outgoing message tapes, copies its counters to its outgoing message tape and becomes ready again. When a $\bar{a}$, $\bar{b}$ or $\bar{c}$ meets a $c$ they only copy to their output tape the output bit contained in $c$'s message and become ready again (eg $\bar{a}$ remains $\bar{a}$), while $c$ does nothing.

Note that the number of $c$s is at most $n$ which means that the $c$-counter will become at most $\lceil \log n\rceil$ bits long, and the same holds for the $a$ and $b$ counters, so $\mathcal{O}(\log n)$ memory is required in each tape.

\begin{theorem}
The above PM protocol stably computes the predicate $(N_c=N_a\cdot N_{b})$ using $\mathcal{O}(\log n)$ space.
\end{theorem}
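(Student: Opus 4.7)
My plan is to establish (i) the $\mathcal{O}(\log n)$ space bound and (ii) that every fair execution reaches an output-stable configuration in which every agent outputs $[N_c=N_a\cdot N_b]$. The space bound is straightforward: the three binary counters stored by any agent in state $c$ take values in $\{0,\ldots,n\}$, so each occupies at most $\lceil\log(n+1)\rceil$ bits, their product fits in $\mathcal{O}(\log n)$ bits, and binary multiplication together with comparison of $\mathcal{O}(\log n)$-bit integers is in $\mathsf{LOGSPACE}$, so the working tape never exceeds $\mathcal{O}(\log n)$ cells. The message tapes carry a constant number of such fields plus a single output bit, and the output tape holds only one bit.

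For correctness I would fix an arbitrary fair execution and expose a conservation invariant preserved by every step. Partition the population at any moment into live $c$-agents with counter triples $(\alpha_i,\beta_i,\gamma_i)$, sleeping agents in states $\bar{a}$, $\bar{b}$, $\bar{c}$, and unabsorbed $a$- or $b$-agents. A case analysis over the interaction rules ($a$-$c$, $b$-$c$, $c$-$c$, and sleeping-meets-$c$) gives
\[
\sum_i \alpha_i + \#\{\text{unabsorbed }a\}=N_a,\quad \sum_i \beta_i + \#\{\text{unabsorbed }b\}=N_b,\quad \sum_i \gamma_i=N_c.
\]
In particular, each $c$-$c$ interaction is a merge: the initiator absorbs the responder's counters while the responder empties itself and becomes $\bar{c}$, preserving all three sums.

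Using fairness I would then show convergence in two stages. The number of unabsorbed $a$- or $b$-agents and the number of live $c$-agents are all non-increasing; whenever any is strictly positive, an interaction that strictly decreases it is enabled and, by fairness, must eventually occur. Iterating, the execution reaches a configuration with exactly one live $c$-agent, whose counters equal $(N_a,N_b,N_c)$ by the invariant; this agent then writes the correct bit $b^{\star}=[N_c=N_a\cdot N_b]$ on its output and in every outgoing message. A second application of fairness guarantees that each sleeping agent eventually meets this surviving $c$-agent and copies $b^{\star}$; from then on only rewrites of the same bit can occur, so the configuration is output-stable.

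The step I expect to be the main obstacle is the stability half of the argument: during the intermediate phase several live $c$-agents may have written incorrect output bits, and these bits are broadcast through $c$-$\bar{a}$, $c$-$\bar{b}$, and $c$-$\bar{c}$ interactions to many other agents. Proving that all such premature outputs are eventually overwritten by the correct $b^{\star}$ requires arguing that after the final $c$-$c$ merge the unique surviving $c$-agent never again changes its output bit (which it does not, since its counters stop changing) and then invoking fairness once more to force every other agent to meet this $c$-agent at least once after that terminal moment. A minor side issue is the degenerate case $N_c=0$, where no $c$-agent ever exists to emit a bit; this is handled by augmenting $a$-$a$, $b$-$b$ and $a$-$b$ rules with a default output (outputting $1$ when only one of the letters $a$, $b$ is present and $0$ otherwise), at no asymptotic space cost.
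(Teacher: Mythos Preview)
Your argument follows the same skeleton as the paper's proof---fairness forces convergence to a unique surviving $c$-agent whose counters equal $(N_a,N_b,N_c)$, and a second appeal to fairness propagates its output bit to all sleeping agents---but you make explicit the conservation invariant and the output-stability reasoning that the paper leaves implicit. Your observation about the degenerate case $N_c=0$ is also well taken: the protocol as described in the paper does not handle it, and your proposed patch is sound and costs nothing asymptotically.
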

\begin{proof}
Given a fair execution, eventually only one agent in state $c$ will remain, its $a$-counter will contain the total number of $a$s, its $b$-counter the total number of $b$s, and its $c$-counter the total number of $c$s. By executing the multiplication of the $a$ and $b$ counters and comparing the result to its $c$-counter it will correctly determine whether $(N_c=N_a\cdot N_{b})$ holds and it will store the correct result (0 or 1) to its output and outgoing message tapes. At that point all other agents will be in one of the states $\bar{a}$, $\bar{b}$, and $\bar{c}$. All these, again due to fairness, will eventually meet the unique agent in state $c$ and copy its correct output bit (which they will find in the message they get from $c$) to their output tapes. Thus, eventually all agents will output the correct value of the predicate, having used $\mathcal{O}(\log n)$ memory.
\qed
\end{proof}

\begin{corollary}
The class of semilinear predicates is a proper subset of $PLM$.
\end{corollary}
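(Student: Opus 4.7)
The plan is to establish the two directions separately: containment $SEMILINEAR \subseteq PLM$, and then strictness via an explicit witness.

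For containment, I would invoke the fundamental characterization of Angluin \emph{et al.} \cite{AADFP06}, which states that the predicates stably computable by standard Population Protocols are exactly the semilinear ones. Then I would argue that any PP is directly simulable by a PM protocol at no additional asymptotic space cost: since a PP agent is a finite automaton with $\mathcal{O}(1)$ states, a PM agent can encode that automaton by identifying the PP state with its own control state $q \in Q$ and using the external transition function $\gamma$ to replicate the PP's pairwise rule verbatim, leaving the internal transition $\delta$ essentially idle (it merely hands back control and sets the working flag to $0$). The working, output, incoming- and outgoing-message tapes need only hold $\mathcal{O}(1)$ symbols (e.g.\ the current output bit derived from $q$), so the space usage is $\mathcal{O}(1) = \mathcal{O}(\log n)$. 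Hence every semilinear predicate lies in $PLM$.

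For strictness, the witness is supplied for free by the theorem just proved in Section \ref{app:mult1}: the predicate $(N_c = N_a \cdot N_b)$ belongs to $PLM$. It then suffices to observe that this predicate is \emph{not} semilinear. Indeed, its defining set $\{(N_a, N_b, N_c) \in \mathbb{N}^3 : N_c = N_a N_b\}$ is not a finite union of linear sets, since semilinear sets are closed under projection and coincide with the Presburger-definable sets, whereas multiplication is famously not definable in Presburger arithmetic. This places the multiplication predicate in $PLM \setminus SEMILINEAR$.

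Combining both parts gives $SEMILINEAR \subsetneq PLM$. I do not foresee any genuine obstacle here: the simulation in the first direction is routine and the second direction is an immediate corollary of the previous theorem together with a standard non-semilinearity fact. The only mildly delicate point worth spelling out is the PP-to-PM simulation, where one must confirm that the PM conventions (interaction only when both working flags are $0$, message swap, atomicity) do not interfere with reproducing PP behaviour step by step; but since the simulating PM agent can simply ignore the message tapes and keep the working flag permanently at $0$ between interactions, fairness of the PM scheduler directly implies fairness of the simulated PP execution, and the correctness of the output transfers automatically.
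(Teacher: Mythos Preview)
Your proposal is correct and follows essentially the same approach as the paper: the paper's proof is the one-line observation that ``PALOMA protocols simulate population protocols and $(N_c=N_a\cdot N_{b})\in PLM$, which is non-semilinear,'' and you have simply unpacked both halves of that sentence with appropriate detail (the PP-to-PM simulation for containment, and the Presburger undefinability of multiplication for non-semilinearity).
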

\begin{proof}
PALOMA protocols simulate population protocols and $(N_c=N_a\cdot N_{b})\in PLM$, which is non-semilinear.
\qed
\end{proof}

\subsection{Power of 2} \label{app:pow}

Here, we present a PM protocol that, using $\mathcal{O}(\log n)$ memory, stably computes the predicate $(N_1=2^{t})$, where $t\in \mathbb{Z}_{\geq 0}$, on the complete communication graph of $n$ nodes, that is, all agents eventually decide whether the number of $1$s in the input assignment is a power of 2.

The set of input symbols is $\Sigma=\{0,1\}$. All agents that receive 1 create a binary $1$-counter to their working tape and initialize it to $1$. Moreover, they create a binary $next\_pow\_of2$ block and set it to $2$. Finally, they write 1 (which is interpreted as ``true'') to their output tape, and copy the $1$-counter and the output bit to their outgoing message tape before going to state 1 and becoming ready. Agents that receive input 0 write 0 (which is interpreted as ``false'') to their output tape, go to state 0, and become ready. Agents in state 0 do nothing when interacting with each other. When an agent in state 0 interacts with an agent in state 1, then 0 simply copies the output bit of 1. When two agents in state 1 interact, then the following happen: the initiator sets its $1$-counter to the sum of the responder's $1$-counter and its own $1$-counter and compares its updated value to $next\_pow\_of2$. If it is less than $next\_pow\_of2$ then it writes 0 to the output tape. If it is equal to $next\_pow\_of2$ it sets $next\_pow\_of2$ to $2\cdot next\_pow\_of2$ and sets its output bit (in the output tape) to 1. If it is greater than $next\_pow\_of2$, then it starts doubling $next\_pow\_of2$ until $1$-counter $\geq next\_pow\_of2$ is satisfied. If it is satisfied by equality, then it doubles $next\_pow\_of2$ one more time and writes 1 to the output tape. Otherwise, it simply writes 0 to the output tape. Another implementation would be to additionally send $next\_pow\_of2$ blocks via messages and make the initiator set $next\_pow\_of2$ to the maximum of its own and the responder's $next\_pow\_of2$ blocks. In this case at most one doubling would be required. Finally, in both implementations, the initiator copies the output bit and the $1$-counter to its outgoing message tape (in the second implementation it would also copy $next\_pow\_of2$ to the outgoing message tape), remains in state $1$, and becomes ready. The responder simply goes to state $\bar{1}$ and becomes ready. An agent in state $\bar{1}$ does nothing when interacting with an agent in state 0 and vice versa. When an agent in state $\bar{1}$ interacts with an agent in state 1, then $\bar{1}$ simply copies the output bit of 1.

Note that $next\_pow\_of2$ can become at most 2 times the number of $1$s in the input assignment, and the latter is at most $n$. Thus, it requires at most $\lceil \log 2n\rceil$ bits of memory. Either way, we can delay the multiplication until another 1 appears, in which case we need at most $\lceil \log n\rceil$ bits of memory for storing $next\_pow\_of2$ (the last unnecessary multiplication will never be done).

\section{Assigning Unique IDs by Reinitiating Computation} \label{sec:uids}

In this section, we first prove that PM protocols can assume the existence of unique consecutive ids and knowledge of the population size at the space cost of $\mathcal{O}(\log n)$ (Theorem \ref{the:iplm}). In particular, we present a PM protocol that correctly assigns unique consecutive ids to the agents and informs them of the correct population size using only $\mathcal{O}(\log n)$ memory, without assuming any initial knowledge of none of them. We show that this protocol can simulate any PM protocol that assumes the existence of these ids and knows the population size. At the end of the section, we exploit this result to prove that $SSPACE(n\log n) \subseteq PLM$.

Pick any $p\in SIPLM$. Let $\mathcal{A}$ be the IPM protocol that stably computes it in $\mathcal{O}(\log n)$ space. We now present a PM protocol $\mathcal{I}$, containing protocol $\mathcal{A}$ as a subroutine (see Protocol \ref{prot:ids}), that stably computes $p$, by also using $\mathcal{O}(\log n)$ space. $\mathcal{I}$ is always executed and its job is to assign unique ids to the agents, to inform them of the correct population size and to control $\mathcal{A}$'s execution (e.g. restarts its execution if needed). $\mathcal{A}$, when $\mathcal{I}$ allows its execution, simply reads the unique ids and the population size provided by $\mathcal{I}$ and executes itself normally. We first present $\mathcal{I}$ and then prove that it eventually correctly assigns unique ids and correctly informs the agents of the population size, and that when this process comes to a successful end, it restarts $\mathcal{A}$'s execution in all agents without allowing non-reinitialized agents to communicate with the reinitialized ones. Thus, at some point, $\mathcal{A}$ will begin its execution reading the correct unique ids and the correct population size (provided by $\mathcal{I}$), thus, it will get correctly executed and will stably compute $p$.

We begin by describing $\mathcal{I}$'s variables. $id$ is the variable storing the id of the agent (from which $\mathcal{A}$ reads the agents' ids), $sid$ the variable storing the $id$ that an agent writes in its outgoing message tape in order to send it, and $rid$ the variable storing the $id$ that an agent receives via interaction. Recall the model's convention that all variables used for sending information, like $sid$, preserve their value in future interactions unless altered by the agent. Initially, $id=sid=0$ for all agents. All agents have an input backup variable $binput$ which they initially set to their input symbol and make it read-only. Thus, each agent has always available its input via $binput$ even if the computation has proceeded. $working$ represents the block of the working tape that $\mathcal{A}$ uses for its computation and $output$ represents the contents of the output tape. $initiator$ is a binary flag that after every interaction becomes true if the agent was the initiator of the interaction and false otherwise (this is easily implemented by exploiting the external transition function). $ps$ is the variable storing the population size, $sps$ the one used to put it in a outgoing message, and $rps$ the received one. Initially, $ps=sps=1$.

We now describe $\mathcal{I}$'s functionality. Whenever a pair of agents with the same id interact, the initiator increases its id by one and both update their population size value to the greater id plus one. Whenever two agents with different ids and population size values interact, they update their population size variables to the greater size. Thus the correct size (greatest id plus one) is propagated to all agents. Both interactions described above reinitialize the participating agents (restore their input and erase all data produced by the subroutine $\mathcal{A}$, without altering their ids and population sizes). $\mathcal{A}$, runs as a subroutine whenever two agents of different ids and same population sizes interact, using those data provided by $\mathcal{I}$.

\algsetup{indent=2em}
\floatname{algorithm}{Protocol}
\renewcommand{\algorithmiccomment}[1]{// #1}
\begin{algorithm}[H]
  \caption{$\mathcal{I}$}\label{prot:ids}
  \begin{algorithmic}[1]
    \medskip
    \IF [two agents with the same ids interact]{$rid==id$}
       \IF [the initiator]{$initiator==1$}
          \STATE $id\leftarrow id+1$, $sid\leftarrow id$ \COMMENT {increases its id by one and writes it in the outgoing message tape}
	  \STATE $ps\leftarrow id+1$, $sps\leftarrow ps$ \COMMENT {sets the population size equal to its updated id plus 1}
       \ELSE [the responder]
	  \STATE $ps\leftarrow id+2$, $sps\leftarrow ps$
       \ENDIF
       \STATE \COMMENT {both clear their working block and copy their input symbol into it}
       \STATE \COMMENT {they also clear their output tape}
       \STATE $working\leftarrow binput$, $output\leftarrow \emptyset$
    \ELSE [two agents whose ids differ interact]
      \IF [the one who knows an outdated population size]{$rps > ps$}
         \STATE $working\leftarrow binput$, $output\leftarrow \emptyset$ \COMMENT {is reinitialized}
         \STATE $ps\leftarrow rps$, $sps\leftarrow ps$ \COMMENT {and updates its population size to the greater value}
	  \ELSIF [they know the same population size] {$rps == ps$}
	  	\STATE \COMMENT {so they are both reinitialized and can proceed executing $\mathcal{A}$}
	    \PRINT $\mathcal{A}$ for $1$ step
	  \ENDIF
    \ENDIF
  \end{algorithmic}
\end{algorithm}

\begin{lemma} \label{lem:ids1}
(i) No agent $id$ will get greater than $n-1$, and no $ps$ variable will get greater than $n$. (ii) $\mathcal{I}$ assigns the ids $\{0,1,\ldots,n-1\}$ in a finite number of interactions. (iii) $\mathcal{I}$ sets the ps variable of each agent to the correct population size in a finite number of interactions.
\end{lemma}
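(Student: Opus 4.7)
The plan is to carry a strong joint invariant through a careful induction for~(i), then deduce~(ii) via a monotone-potential argument exploiting fairness on the complete graph, and finally derive~(iii) by observing that once~(ii) has been reached the $ps$ dynamics reduces to pure max-propagation. For~(i) I would prove by induction on the number of interactions the invariant that, in every reachable configuration, the set of ids held by the agents equals $\{0,1,\ldots,k\}$ for some $k$, and $ps(u)\le k+1$ for every agent $u$. The base case is immediate. In the same-id branch between two agents of common id $j$, the initiator's new id is $j+1$ and both participants set $ps$ to $j+2$: if $j=k$, the id set grows to $\{0,\ldots,k+1\}$ and $j+2=(k+1)+1$; if $j<k$, the set is preserved (the responder keeps id $j$, and $j+1\le k$ is already present) and $j+2\le k+1$. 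In the different-ids branch only $ps$ may change, overwritten by a received value that already satisfies the bound by hypothesis. Since $\{0,\ldots,k\}$ contains $k+1$ distinct values distributed among $n$ agents, $k\le n-1$, so $id\le n-1$ and $ps\le n$, proving~(i). Note also that when $k=n-1$, pigeonhole forces each id in $\{0,\ldots,n-1\}$ to be held by exactly one agent, so the same-id branch can never fire again.

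For~(ii) I would take the potential $S=\sum_{u\in V}id(u)$. The only rule modifying any $id$ is the initiator-side update in the same-id branch, which raises $S$ by exactly one; every other transition leaves $S$ fixed. By~(i), $S\le\sum_{i=0}^{n-1}i=n(n-1)/2$, so $S$ is non-decreasing and bounded, hence stabilizes after finitely many interactions. Once $S$ is fixed no same-id interaction may fire again; fairness on the complete graph, however, guarantees that every pair of agents is scheduled infinitely often, so no two agents can still share an id. Combined with the invariant of~(i), the multiset of ids is forced to be exactly $\{0,1,\ldots,n-1\}$, with each value held by a unique agent.

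For~(iii) I would argue that, by~(ii), in finite time some same-id interaction raises the maximum id from $n-2$ to $n-1$; at that very step both participating agents set $ps=n$. After this interaction all ids are pairwise distinct, so the same-id branch never fires again and the only rule affecting $ps$ is the $rps>ps$ update, which monotonically propagates the pointwise maximum. By~(i) no $ps$ ever exceeds $n$, so the population-wide maximum is exactly $n$. By fairness and completeness of the graph, every remaining agent with $ps<n$ eventually interacts with some agent whose $ps$ equals $n$ and updates, so after finitely many further interactions every $ps$ variable equals the true population size $n$.

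The main obstacle is the subtlety that, in the same-id branch, an agent's $ps$ is overwritten by $j+2$, which can be \emph{smaller} than its previous $ps$; individual $ps$ values are therefore not monotone during the id-assignment phase. This is why the $ps$-propagation argument for~(iii) cannot start until~(ii) has completed, and why the invariant $ps\le k+1$ must be carried through phase~(i) alongside the id invariant rather than proved as an afterthought: it is precisely this uniform upper bound that guarantees, once the ids stabilize, the final population-wide maximum of $ps$ is exactly $n$ rather than some larger spurious value that would leave agents stranded at the wrong answer.
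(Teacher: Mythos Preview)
Your proof is correct and follows essentially the same three-step decomposition as the paper's own argument. The paper's version is terser: for~(i) it simply asserts by induction that an id of value~$v$ requires at least $v+1$ agents, for~(ii) it argues by contradiction that persistent id collisions plus fairness would drive some id beyond $n-1$, and for~(iii) it observes that the last collision (at common id $n-2$) seeds $ps=n$ in two agents, after which fairness propagates this value. Your stronger invariant in~(i) (that the id \emph{set} is always a contiguous block $\{0,\ldots,k\}$, with $ps\le k+1$) and your explicit monotone potential $S=\sum_u id(u)$ in~(ii) are cleaner formalizations of exactly the same ideas; they buy you a direct identification of the final id assignment and a crisp handle on the non-monotonicity of $ps$ you flag at the end, but they do not change the route.
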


\begin{proof}
(i) By an easy induction, in order for an $id$ to reach the value $v$, there have to be at least $v+1$ agents present in the population. Thus, whenever an $id$ gets greater than $n-1$, there have to be more than $n$ agents present, which creates a contradiction. Similar arguments hold for the $ps$ variables\\
(ii) Assume on the contrary that it does not. Because of (i), at each point of the computation there will exist at least two agents, $u$, $v$ such that $id_u=id_v$. Due to fareness, an interaction between such agents shall take place infinitely many times, creating an arbitrarily large $id$ which contradicts (i).\\
(iii) The correctness of the id assignment ((i),(ii)) guarantees that after a finite number of steps two agents, $u$, $v$ will set their $ps$ variables to the correct population size (upon interaction in which $id_u=id_v=n-2$). It follows from (i) that no agent will have its $ps$ variable greater than $n$. Fareness guarantees that each other agent will interact with $u$ or $v$, updating its $ps$ to $n$.
\qed
\end{proof}

\begin{lemma} \label{lem:ids5}
Given that $\mathcal{I}$'s execution is fair, $\mathcal{A}$'s execution is fair as well.
\end{lemma}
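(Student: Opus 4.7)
\medskip

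\noindent\textbf{Proof proposal.} The plan is to show that after some finite prefix of $\mathcal{I}$'s execution the id/ps bookkeeping freezes, that from that point onward $\mathcal{I}$'s execution is essentially a faithful ``coordinate change'' of $\mathcal{A}$'s execution, and that fairness then transfers through the natural projection.

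First I would invoke Lemma \ref{lem:ids1} to show that in every fair execution of $\mathcal{I}$ there is a finite index $t^{*}$ and a configuration $C^{*}$ appearing at step $t^{*}$ such that the agents' $id$ variables form the set $\{0,1,\ldots,n-1\}$ and every $ps$ variable equals $n$. For every interaction $(u,\upsilon)$ occurring at step $\geq t^{*}$ we have $id_u\neq id_\upsilon$ and $ps_u=ps_\upsilon=n$, so the ``if $rid==id$'' and ``if $rps>ps$'' branches of Protocol~\ref{prot:ids} are never entered again; consequently no reinitialization ever occurs after $t^{*}$, and each effective interaction falls into the single remaining branch, which executes one step of $\mathcal{A}$. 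Internal transitions (agent transitions, handled by $\mathcal{A}$'s $\delta$) are also unaffected by $\mathcal{I}$ after $t^{*}$ because the $\mathcal{I}$-bookkeeping variables are inert. Thus from $t^{*}$ on, $\mathcal{I}$'s dynamics on the $working$/$output$/state fields coincides, step by step, with a valid execution of $\mathcal{A}$ on the identified population $\{0,1,\ldots,n-1\}$ with known size $n$.

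Next I would make the correspondence precise by defining a projection $\pi$ that maps each $\mathcal{I}$-population-configuration reachable from $C^{*}$ to the $\mathcal{A}$-population-configuration obtained by retaining only the fields used by $\mathcal{A}$ (control state, working block, output, message tapes, working flag, and the $id$/$ps$ that $\mathcal{A}$ reads). Because the id and ps fields are now frozen at their correct values, $\pi$ is injective on configurations reachable from $C^{*}$, and $\mathcal{A}$'s execution \emph{is} the sequence $\pi(C^{*}),\pi(C^{*}_{1}),\pi(C^{*}_{2}),\ldots$ with each transition (agent transition via $e=u$ or encounter via $e=(u,\upsilon)$) preserved: $C\stackrel{e}{\to}_{\mathcal{I}} C'$ iff $\pi(C)\stackrel{e}{\to}_{\mathcal{A}}\pi(C')$.

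Finally I would transfer fairness. Let $D,D'$ be $\mathcal{A}$-population-configurations with $D\to_{\mathcal{A}} D'$ and suppose $D$ appears infinitely often in $\mathcal{A}$'s execution. Then its unique $\pi$-preimage $C$ appears infinitely often in $\mathcal{I}$'s execution. Pick some $e$ witnessing $D\to_{\mathcal{A}} D'$; the same $e$ witnesses $C\to_{\mathcal{I}} \pi^{-1}(D')$, since the $\mathcal{I}$-guards for invoking $\mathcal{A}$ hold automatically after $t^{*}$. Fairness of $\mathcal{I}$ now forces $\pi^{-1}(D')$ to appear infinitely often, whence $D'$ appears infinitely often in $\mathcal{A}$'s execution, establishing fairness of $\mathcal{A}$.

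The main obstacle I expect is cleanly delimiting the ``$\mathcal{A}$-part'' of a PM configuration so that the projection $\pi$ is genuinely a bijection between reachable-from-$C^{*}$ $\mathcal{I}$-configurations and the $\mathcal{A}$-configurations that appear along the simulated run, and making sure that every one-step $\mathcal{A}$-transition (including internal $\delta$-steps, not only encounters) is realizable as a single $\mathcal{I}$-step after stabilization; once this correspondence is made watertight, the fairness transfer is a direct pullback/pushforward argument.
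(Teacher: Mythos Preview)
Your overall architecture matches the paper's: first invoke Lemma~\ref{lem:ids1} to freeze the $id$/$ps$ bookkeeping after finitely many steps, then observe that from that point on $\mathcal{I}$ merely wraps each interaction with some fixed checks before handing control to $\mathcal{A}$, and finally push fairness through a projection onto the $\mathcal{A}$-subcomponent of each configuration. Where you diverge from the paper is precisely at the point you yourself flag as the main obstacle, and it is a genuine gap rather than a detail.

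Your fairness transfer hinges on $\pi$ being injective, so that an $\mathcal{A}$-configuration $D$ appearing infinitely often forces \emph{the} preimage $C=\pi^{-1}(D)$ to appear infinitely often, after which fairness of $\mathcal{I}$ applied to $C\to_{\mathcal{I}}\pi^{-1}(D')$ finishes. But $\pi$ is not injective: even after $id$ and $ps$ stabilize, an $\mathcal{I}$-configuration still records where each agent currently sits inside $\mathcal{I}$'s internal variable checks (the tests $rid==id$, $rps>ps$, $rps==ps$) before the ``execute $\mathcal{A}$ for $1$ step'' line is reached, as well as the $\mathcal{I}$-portion of the incoming message tape ($rid$, $rps$), which depends on the last partner. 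Two $\mathcal{I}$-configurations can therefore differ only in this $\mathcal{I}$-bookkeeping and share the same $\mathcal{A}$-projection. The paper states this explicitly: ``some $C_{\mathcal{A}}$ may correspond to many superconfigurations $C$''. With multiple preimages, pigeonhole only gives that \emph{some} preimage appears infinitely often, and that preimage need not be one from which the event $e$ witnessing $D\to_{\mathcal{A}}D'$ is realizable as a single $\mathcal{I}$-step.

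The paper closes this gap not by forcing $\pi$ to be a bijection but by showing that, after stabilization, \emph{all} superconfigurations of a given $C_{\mathcal{A}}$ are mutually reachable: they differ only in $\mathcal{I}$'s check-progress, and each can reach (and be reached from) a common superconfiguration in which no $\mathcal{I}$-check is mid-flight. Hence if any one superconfiguration of $C_{\mathcal{A}}$ appears infinitely often, fairness of $\mathcal{I}$ forces every superconfiguration of $C_{\mathcal{A}}$ to appear infinitely often---in particular the specific $C$ for which $C\to_{\mathcal{I}}C'$ with $C'$ a superconfiguration of $C'_{\mathcal{A}}$. That mutual-reachability argument is the missing ingredient you need in place of the injectivity claim.
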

\begin{proof}

Due to the fact that the id-assignment process and the population size propagation are completed in a finite number of steps, it suffices to study fairness of $\mathcal{A}$'s execution after their completion. The state of each agent may be thought of as containing an $\mathcal{I}$-subcomponent and an $\mathcal{A}$-subcomponent, with obvious contents. Denote by $C_{\mathcal{A}}$ the unique subconfiguration of $C$ consisting only of the $\mathcal{A}$-subcomponents of all agents and note that some $C_{\mathcal{A}}$ may correspond to many superconfigurations $C$. Assume that $C_{\mathcal{A}}\rightarrow C^{\prime}_{\mathcal{A}}$ and that $C_{\mathcal{A}}$ appears infinitely often (since here we consider $\mathcal{A}$'s configurations, this `$\rightarrow$' refers to a step of $\mathcal{A}$'s execution). $C_{\mathcal{A}}\rightarrow C^{\prime}_{\mathcal{A}}$ implies that there exist superconfigurations $C$, $C^{\prime}$ of $C_{\mathcal{A}}$, $C^{\prime}_{\mathcal{A}}$, respectively, such that $C\rightarrow C^{\prime}$ (via some step of $\mathcal{A}$ in the case that $C_{\mathcal{A}}\neq C^{\prime}_{\mathcal{A}}$). Due to $\mathcal{I}$'s fairness, if $C$ appears infinitely often, then so does $C^{\prime}$ and so does $C^{\prime}_{\mathcal{A}}$ since it is a subconfiguration of $C^{\prime}$. Thus, it remains to show that $C$ appears infinitely often. Since $C_{\mathcal{A}}$ appears infinitely often, then the same must hold for all of its superconfigurations. The reasoning is as follows. All those superconfigurations differ only in the $\mathcal{I}$-subcomponents, that is, they only differ in some variable checks performed by $\mathcal{I}$ (after the id-assignment process and the population size propagation have come to an end, nothing else is performed by $\mathcal{I}$). But all of them are reachable from and can reach a common superconfiguration of $C_{\mathcal{A}}$ in which no variable checking is performed by $\mathcal{I}$, thus, they only depend on which pair of agents is selected for interaction and they are all reachable from one another. Since at least one of them appears infinitely often then, due to the fairness of $\mathcal{I}$'s execution, all of them must also appear infinitely often and this completes the proof.
\qed
\end{proof}

By combining the above lemmata we can prove the following:

\begin{theorem} \label{the:iplm}
$PLM=SIPLM$.
\end{theorem}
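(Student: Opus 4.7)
The plan is to prove both inclusions of the equality $PLM = SIPLM$. The inclusion $PLM \subseteq SIPLM$ is essentially immediate: any PM protocol is already an IPM protocol that simply ignores the supplied unique identifiers and population size, so the $O(\log n)$ space bound carries over directly.

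For the nontrivial inclusion $SIPLM \subseteq PLM$, I would take an arbitrary $p \in SIPLM$ witnessed by an IPM protocol $\mathcal{A}$ running in $O(\log n)$ space, and show that protocol $\mathcal{I}$ of Protocol \ref{prot:ids}, which invokes $\mathcal{A}$ as its subroutine, stably computes $p$ within $O(\log n)$ space. The space accounting is the easy part: by Lemma \ref{lem:ids1}(i), every variable maintained by $\mathcal{I}$ (namely $id, sid, rid, ps, sps, rps$, together with the binary $initiator$ flag and the read-only backup $binput$) is bounded above by $n$ and therefore fits in $O(\log n)$ bits, which combined with $\mathcal{A}$'s own $O(\log n)$ footprint still yields $O(\log n)$ total.

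The heart of the argument is then correctness. By parts (ii) and (iii) of Lemma \ref{lem:ids1}, after a finite prefix of the computation every agent carries a distinct identifier from $\{0,1,\ldots,n-1\}$ and every $ps$ variable equals $n$; I will call such a population configuration \emph{stabilized}. From any stabilized configuration onward, neither the $rid == id$ branch nor the $rps > ps$ branch of $\mathcal{I}$ can ever fire again, so no further reinitialization of working/output tapes occurs and every subsequent interaction executes exactly one step of $\mathcal{A}$ through the $rps == ps$ branch. Crucially, the reinitialization step that accompanied the transition into the stabilized regime leaves each agent with its input correctly restored via $binput$, its output tape erased, and the now correct identifier and $ps$ value exposed to $\mathcal{A}$; hence from that point on $\mathcal{A}$ effectively executes from a legitimate initial IPM configuration on input $x$ with the correct IDs and the true population size $n$ in place.

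The last step is a fairness invocation. Lemma \ref{lem:ids5} guarantees that, given fairness of the overall $\mathcal{I}$-computation, the induced sequence of $\mathcal{A}$-subconfigurations is itself a fair execution of $\mathcal{A}$. Since $\mathcal{A}$ stably computes $p$ on every fair execution starting from a legitimate initial IPM configuration on input $x$, all agents must eventually agree on the output $p(x)$, and no subsequent interaction (all of which are now pure $\mathcal{A}$-steps) can disturb this output. The main subtlety I expect is precisely the bookkeeping around the ``junk'' partial work that $\mathcal{A}$ may perform before stabilization: one must be certain that every such premature computation is wiped out by the reinitialization clauses before $\mathcal{A}$ resumes under correct IDs, which is exactly what the explicit assignments $working \leftarrow binput$ and $output \leftarrow \emptyset$ in Protocol \ref{prot:ids} are designed to guarantee, and isolating this invariant is the one place where the proof requires genuine care rather than routine unwinding of definitions.
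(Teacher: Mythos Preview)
Your proposal is correct and follows essentially the same route as the paper: the trivial inclusion $PLM \subseteq SIPLM$, then the converse via Protocol~\ref{prot:ids}, invoking Lemma~\ref{lem:ids1} for the eventual correctness of the id and population-size assignment and Lemma~\ref{lem:ids5} for fairness of the induced $\mathcal{A}$-execution. Your write-up is in fact more detailed than the paper's own proof (which is quite terse), particularly in the explicit space accounting and in flagging the reinitialization bookkeeping as the one delicate point.
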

\begin{proof}
$PLM\subseteq SIPLM$ holds trivially, so it suffices to show that $SIPLM\subseteq PLM$. We have presented a $PLM$ protocol (protocol \ref{prot:ids}) that assigns the agents unique consecutive ids after a finite number of interactions and informs them of the population size (lemma \ref{lem:ids1}). It follows directly from the protocol that after that point, further fair execution of $\mathcal{I}$ will result in execution of protocol $\mathcal{A}$ which can take into account the existance of unique ids. Moreover, execution of $\mathcal{A}$ is guaranteed to be fair (lemma \ref{lem:ids5}).
\qed
\end{proof}

\section{Exploring $PLM$}\label{plm}

\subsection{An upper bound}
We first prove that $PLM \subseteq NSPACE(n\log n)$.

\begin{theorem}\label{the:plmup}
 All predicates in $PLM$ are in the class $NSPACE(n\log n)$.
\end{theorem}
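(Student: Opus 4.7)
The plan is to show that, for a fixed input of length $n$, the full state of a PALOMA system fits in $O(n\log n)$ bits, and then to exploit reachability in the resulting finite configuration graph together with the Immerman--Szelepcs\'enyi theorem to simulate the protocol nondeterministically.

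First I would fix a predicate $p\in PLM$ stably computed by some PM protocol that uses $O(\log n)$ space per agent, and bound the size of a population configuration. Since each of the four tapes of each agent holds at most $O(\log n)$ non-blank cells and the control state is of cardinality $O(1)$, a single agent configuration fits in $O(\log n)$ bits, so a whole population configuration $C\colon V\to\mathcal{B}$ fits in $O(n\log n)$ bits. The initial configuration $C_0(x)$ is computable from $x$ in linear space, and a single transition step $C\to C'$ (via either an encounter $e\in E$ or an agent transition $u\in V$) can be applied in place after nondeterministically guessing the participating edge or agent and the step of $\delta$ or $\gamma$ to apply.

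Next I would translate stable computation into a pure reachability question. Because the configuration graph is finite for each fixed $n$, a standard fairness argument shows that $p(x)=1$ iff some output-stable configuration $C^{*}$ in which every agent's output tape holds $1$ is reachable from $C_0(x)$: the ``only if'' direction is immediate from the definition of stable computation, while for the ``if'' direction any finite path $C_0(x)\stackrel{*}\to C^{*}$ can be extended to a fair computation, which by output-stability of $C^{*}$ forces the stable value to be $1$ (and uniqueness of $p(x)$ excludes the other outcome). The nondeterministic decider's job is therefore, on input $x$, to (i) guess such a $C^{*}$ and verify $C_0(x)\stackrel{*}\to C^{*}$ by guessing one transition at a time and overwriting the current configuration, using $O(n\log n)$ working space, (ii) scan $C^{*}$ to check that every output tape holds $1$, and (iii) verify that $C^{*}$ is output-stable.

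The main obstacle is task (iii). The property ``every configuration reachable from $C^{*}$ has all outputs equal to $1$'' is the complement of a nondeterministic reachability question in a graph whose vertices are encoded in $O(n\log n)$ bits, so it naturally lies in $\mathrm{coNSPACE}(n\log n)$. Invoking the Immerman--Szelepcs\'enyi theorem, which yields $\mathrm{coNSPACE}(f(n))=\mathrm{NSPACE}(f(n))$ whenever $f(n)=\Omega(\log n)$, I can carry out this check within $\mathrm{NSPACE}(n\log n)$ as well. Composing the three phases into a single nondeterministic machine yields an $\mathrm{NSPACE}(n\log n)$ decider for $L_p$, which proves $PLM\subseteq NSPACE(n\log n)$; the same argument will later generalize word-for-word to give the companion inclusion $PMSPACE(f(n))\subseteq NSPACE(nf(n))$ used in the exact characterization.
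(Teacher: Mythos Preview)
Your proposal is correct and follows essentially the same approach as the paper: encode each population configuration in $O(n\log n)$ bits, nondeterministically guess a path from the initial configuration to an all-$1$ configuration, and then use the Immerman--Szelepcs\'enyi theorem to verify output-stability via the complement of a reachability problem. The only minor addition is that you spell out the fairness argument justifying the equivalence between ``$p(x)=1$'' and ``some output-stable all-$1$ configuration is reachable,'' which the paper leaves implicit.
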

 
\begin{proof}
Let $\mathcal{A}$ be a PM protocol that stably computes such a predicate $p$ using $\mathcal{O}(\log n)$ memory. A population configuration can be represented as an $n-$place vector storing an agent configuration per place, and thus uses $\mathcal{O}(n\log n)$ space in total. The language $L_p$ derived from $p$ is the set of all strings that make $p$ true, that is, $L_p=\{x\in X^*\;|\; p(x)=1\}$.

We will now present a nondeterministic Turing Machine $\mathcal{M_A}$ that decides $L_p$ in $\mathcal{O}(n\log n)$ space. To accept the input (assignment) $x$, $\mathcal{M_A}$ must verify two conditions: That there exists a configuration $C$ reachable from the initial configuration corresponding to $x$ in which the output tape of each agent indicates that $p$ holds, and that there is no configuration $C^{\prime}$ reachable from $C$ under which $p$ is violated for some agent.

The first condition is verified by guessing and checking a sequence of configurations. Starting from the initial configuration, each time $\mathcal{M_A}$ guesses configuration $C_{i+1}$ and verifies that $C_i$ yields $C_{i+1}$. This can be caused either by an agent transition $u$, or an encounter $(u,v)$. In the first case, the verification can be carried out as follows: $\mathcal{M_A}$ guesses an agent $u$ so that $C_i$ and $C_{i+1}$ differ in the configuration of $u$, and that $C_i(u)$ yields $C_{i+1}(u)$.  It then verifies that $C_i$ and $C_{i+1}$ differ in no other agent configurations. Similarly, in the second case $\mathcal{M_A}$ nondeterministically chooses agents $u$, $v$ and verifies that encounter $(u,v)$ leads to $C^{\prime}$ by ensuring that: (a) both agents have their working flags cleared in $C$, (b) the tape exchange takes place in $C^{\prime}$, (c) both agents update their states according to $\gamma$ and set their working flags to $1$ in $C^{\prime}$ and (d) that $C_i$ and $C_{i+1}$ differ in no other agent configurations. In each case, the space needed is $\mathcal{O}(n\log n)$ for storing $C_i$, $C_{i+1}$, plus $\mathcal{O}(\log n)$ extra capacity for ensuring the validity of each agent configuration in $C_{i+1}$.

If the above hold, $\mathcal{M_A}$ replaces $C_i$ with $C_{i+1}$ and repeats this step. Otherwise, $\mathcal{M_A}$ drops $C_{i+1}$. Any time a configuration $C$ is reached in which $p$ holds, $\mathcal{M_A}$ computes the complement of a similar reachability problem: it verifies that there exists no configuration reachable from $C$ in which $p$ is violated. Since $NSPACE$ is closed under complement for all space functions $\geq \log n$ (see Immerman-Szelepcs\'enyi theorem, \cite{Pa94}, pages $151-153$), this condition can also be verified in $\mathcal{O}(n\log n)$ space. Thus, $L_p$ can be decided in $\mathcal{O}(n\log n)$ space by some nondeterministic Turing Machine, so $L_p\in NSPACE(n\log n)$.
\qed
\end{proof}

\subsection{A lower bound}
We now prove that $SNSPACE(n\log n)$ is a subset of $PLM$. We establish this result by first proving a weaker one, that is, that $SSPACE(n\log n)$ is a subset of $PLM$.

\begin{theorem} \label{the:psize}
$SSPACE(n\log n) \subseteq PLM$.
\end{theorem}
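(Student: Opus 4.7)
The plan is to combine Theorem~\ref{the:iplm} with an explicit distributed simulation of the $\mathcal{O}(n\log n)$-space deterministic Turing machine $M$ that decides $L_p$. Theorem~\ref{the:iplm} lets us freely assume that the agents carry unique consecutive ids $0,1,\ldots,n-1$ and know the population size $n$, at $\mathcal{O}(\log n)$ per-agent memory cost. It therefore suffices to design an IPM protocol that, given these ids and $n$, stably computes $p$ within $\mathcal{O}(\log n)$ space per agent. Fix a constant $c$ for which $M$ works within $cn\log n$ cells.

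The key idea is to distribute $M$'s single worktape across the population. I would partition that tape into $n$ consecutive blocks of $k := c\log n$ cells each and make agent $i$ the owner of block $i$, storing it inside its own working tape and thus using only $\mathcal{O}(\log n)$ space. Each agent $i$ starts by placing its input symbol $\sigma_i$ at the first cell of its block, leaving the remaining $k-1$ cells blank; hence $M$'s tape is initialized with input symbols at positions $0,k,2k,\ldots,(n-1)k$ and blanks elsewhere. Without loss of generality $M$ is first preprocessed (still within $\mathcal{O}(n\log n)$ space) to read input in this sparse form, and since $p$ is symmetric the particular ordering of the $\sigma_i$'s imposed by the ids does not affect correctness.

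The agent with id $0$ will play the role of a leader and, in addition to its block, keeps $M$'s control state ($\mathcal{O}(1)$ bits) and head position $h\in[0,cn\log n)$ ($\mathcal{O}(\log n)$ bits) on its tape. To simulate one step of $M$, the leader computes the target block index $i=\lfloor h/k\rfloor$ and in-block offset $j=h\bmod k$ and then waits to meet agent $i$. On that encounter the two agents exchange the cell content at offset $j$; the leader then applies $M$'s transition internally, updates $h$ and the state, and writes the new symbol back either in the same interaction or on the next encounter with agent $i$. Encounters not involving the leader, or involving the leader with a non-target agent, do nothing relevant to the simulation. Because the communication graph is complete and the execution fair, each required encounter eventually occurs, so every step of $M$ is eventually executed and $M$'s computation runs to completion.

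When $M$ halts, the leader writes $b=p(x)$ to its output tape and, from then on, copies $b$ into the output tape of every agent it encounters; fairness again ensures all agents are updated in finite time, producing an output stable configuration. The per-agent space is $\mathcal{O}(\log n)$ throughout — the block of $\Theta(\log n)$ cells, plus for the leader $\mathcal{O}(\log n)$ bits for the head position, plus constant control variables. Lemma~\ref{lem:ids1} guarantees that the $\mathcal{I}$ wrapper from Theorem~\ref{the:iplm} performs only finitely many reinitializations, so the simulation is restarted at most finitely many times before running to the end on a correctly id-ed and size-informed population. The principal obstacle is the absence of direct addressing: the leader cannot summon a specific agent on demand and must instead wait for a fair encounter, so the entire argument is essentially an exercise in repeatedly invoking fairness to guarantee that every needed interaction (and hence every step of $M$, and finally the output propagation) eventually takes place. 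A secondary subtlety, output stability, I handle by arranging that agents write to their output tape only after the leader holds the final bit $b$.
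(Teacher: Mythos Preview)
Your argument is correct, but it is organized differently from the paper's proof. The paper also invokes the id/size infrastructure, but then uses an \emph{interleaved} tape layout rather than contiguous blocks: cell $j$ of $M$'s tape is stored as cell $\lfloor j/n\rfloor$ of the agent with id $j \bmod n$. Consequently there is no fixed leader; instead a ``simulation token'' carrying only $M$'s control state (constant size) is passed from the current agent to its id-neighbor $(i\pm 1)\bmod n$ on every head move. This has two small advantages: the input symbols land in cells $0,\ldots,n-1$ of $M$'s tape without any preprocessing of $M$, and no agent needs to store a global head position. Your contiguous-block-with-leader design is arguably more transparent (one distinguished agent does all the control, the others are passive storage), but you pay for it with the sparse-input preprocessing step and the $\mathcal{O}(\log n)$-bit head counter at the leader. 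Both costs are within budget, so the difference is stylistic rather than substantive.

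One point you gloss over but should make explicit in a full write-up: in the PM model an interaction performs a \emph{simultaneous} message swap, so the leader cannot send ``give me offset $j$'' and receive the answer in the same encounter. The clean fix, which your $\mathcal{O}(\log n)$ budget easily accommodates, is to have every non-leader agent keep its entire $c\log n$-cell block in its outgoing message tape at all times; the leader then reads the block on one encounter and ships the updated block back on the next. This is implicit in your ``either in the same interaction or on the next encounter'' remark, but the mechanism deserves a sentence.
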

\begin{proof}
Let $p:X^*\rightarrow \{0,1\}$ be any predicate in $SSPACE(n\log n)$ and $\mathcal{M}$ be the deterministic TM that decides $p$ by using $\mathcal{O}(n\log n)$ space. We construct a PM protocol $\mathcal{A}$ that stably computes $p$ by exploiting its knowledge of the population size. Let $x$ be any input assignment in $X^*$. Each agent receives its input symbol according to $x$ (e.g. $u$ receives symbol $x(u)$). Now the agents obtain unique ids according to protocol \ref{prot:ids}. The agent that has obtained the unique id $0$ starts simulating $\mathcal{M}$.

Assume that currently the simulation is carried out by an agent $u$ having the id $i_u$. Agent $u$ uses its simulation tape to write symbols according to the transition function of $\mathcal{M}$. Any time the head of $\mathcal{M}$ moves to the left, $u$ moves the head of the simulation tape to the left, pauses the simulation, writes the current state of $\mathcal{M}$ to its outgoing message tape, and passes the simulation to the agent $v$ having id $i_v= (i_u-1) \mod n$. Similarly, any time the head of $\mathcal{M}$ moves to the right, $u$ moves the head of the simulation tape to the right, pauses the simulation, writes the current state of $\mathcal{M}$ to its outgoing message tape, and passes the simulation to the agent $v$ having id $i_v= (i_u+1) \mod n$. In both cases, agent $v$ copies the state of $\mathcal{M}$ to its working tape, and starts the simulation.

Whenever, during the simulation, $\mathcal{M}$ accepts, then $\mathcal{A}$ also accepts; that is, the agent that detects $\mathcal{M}$'s acceptance, writes 1 to its output tape and informs all agents to accept. If $\mathcal{M}$ rejects, it also rejects. Finally, note that $\mathcal{A}$ simulates $\mathcal{M}$ not necessarily on input $x=(s_0,s_1,\ldots,s_{n-1})$ but on some $x^{\prime}$ which is a permutation of $x$. The reason is that agent with id $i$ does not necessarily obtain $s_{i}$ as its input. The crucial remark that completes the proof is that $\mathcal{M}$ accepts $x$ if and only if it accepts $x^{\prime}$, because $p$ is symmetric.

Because of the above process, it is easy to verify that the $k-$th cell of the simulation tape of any agent $u$ having the id $i_u$ corresponds to the $n\cdot k + i_u-$th cell of $\mathcal{M}$. Thus, whenever $\mathcal{M}$ alters $l=\mathcal{O}(n\cdot \log n)$ tape cells, any agent $u$ will alter $l^{\prime}= \frac{l-i_u}{n} =\mathcal{O}(\log n)$ cells of its simulation tape.
\qed
\end{proof}

We now show how the above approach can be generalized to include nondeterministic Turing Machines.
\begin{theorem} \label{the:lowPLM}
$SNSPACE(n\log n) \subseteq PLM$.
\end{theorem}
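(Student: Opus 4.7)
The plan is to extend the construction of Theorem \ref{the:psize}, which already provides a distributed tape of total size $\Theta(n\log n)$ via the unique ids assigned by $\mathcal{I}$, from the deterministic to the nondeterministic setting. Two additional ideas are needed: a way to realize each nondeterministic choice of an NTM inside the protocol, and a way to eventually commit to the correct answer even when the NTM fed to the simulator has no accepting path.

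Let $p\in SNSPACE(n\log n)$ and let $\mathcal{M}$ be an NTM deciding $L_p$ in $\mathcal{O}(n\log n)$ space. By the Immerman-Szelepcs\'enyi theorem, $NSPACE(n\log n)$ is closed under complement, so there is also an NTM $\mathcal{M}'$ deciding $\overline{L_p}$ in $\mathcal{O}(n\log n)$ space. I would combine them into a single NTM $\mathcal{N}$ that, on input $x$, first nondeterministically picks $b\in\{0,1\}$, then simulates $\mathcal{M}$ if $b=1$ and $\mathcal{M}'$ if $b=0$, and accepts with output $b$ iff the simulated machine accepts. Because exactly one of $x\in L_p$ and $x\in\overline{L_p}$ holds, every accepting branch of $\mathcal{N}$ outputs the same bit $p(x)$, and at least one such branch always exists.

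Now I would build a PM protocol $\mathcal{A}$ that simulates $\mathcal{N}$. First, run protocol $\mathcal{I}$ from Section \ref{sec:uids} so that every agent ends up with a unique id in $\{0,\ldots,n-1\}$ and the correct value of $n$; by Theorem \ref{the:iplm} this costs only $\mathcal{O}(\log n)$ space. Next, distribute $\mathcal{N}$'s work tape exactly as in Theorem \ref{the:psize}, so that the $k$-th cell of the agent with id $i$ encodes cell $nk+i$ of the simulated tape. A single simulation token, initially held by id $0$, is passed to id $(i\pm 1)\bmod n$ whenever $\mathcal{N}$'s head crosses a multiple of $n$. When $\mathcal{N}$'s transition function offers $k>1$ next moves at the current configuration, the token-holder $u$ clears its working flag and defers the choice to its next external interaction: upon meeting a partner $v$, the value $id_v\bmod k$ selects one of the options, after which $u$ resumes internal computation on the chosen branch. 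Under a fair scheduler every partner id is available infinitely often, so every sequence of nondeterministic guesses is realized by some reachable execution of $\mathcal{A}$.

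Termination is handled as follows. If a simulated branch accepts with output bit $b$, the token-holder writes $b$ on its output tape, raises a ``decided'' flag, and propagates both along subsequent interactions; any agent that receives a decided message adopts $b$ and freezes its output. If a branch rejects, deadlocks, or tries to exceed the $\mathcal{O}(n\log n)$ tape budget, the protocol restarts $\mathcal{N}$'s simulation from scratch while preserving ids, $n$, and the backed-up input $binput$, reusing the reinitialization mechanism already present in $\mathcal{I}$. The main obstacle will be the correctness argument: I must show that \emph{every} nondeterministic computation of $\mathcal{N}$ is realized as some reachable subexecution of $\mathcal{A}$, so that fairness forces an accepting branch to eventually be found; this requires a reachability analysis in the spirit of Lemma \ref{lem:ids5}, verifying that any desired sequence of guesses can be supplied by appropriate partner choices and restart windows, and that once the ``decided'' flag carrying the unique value $p(x)$ propagates throughout the population the resulting configuration is output stable.
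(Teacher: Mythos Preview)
Your proposal is correct and follows the paper's core mechanism almost exactly: assign ids via $\mathcal{I}$, distribute the NTM tape across agents as in Theorem~\ref{the:psize}, resolve each nondeterministic choice by taking the next interacting partner's id modulo the number of options, restart the simulation on failed branches, and rely on fairness to guarantee that an accepting branch is eventually explored.

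The one substantive difference is how you handle the case where the predicate is false. The paper does this more cheaply: it simply initializes every agent's output to $0$, and only overwrites it with $1$ when an accepting branch of the original NTM $\mathcal{N}$ is found; if $p(x)=0$ no branch ever accepts, the outputs never change, and the all-$0$ configuration is trivially output stable. You instead invoke Immerman--Szelepcs\'enyi to build a composite machine that always has an accepting branch carrying the correct bit $p(x)$, and then search for that branch. Both are sound, but your route imports an extra (nontrivial) theorem that the paper does not need for this direction; the paper's default-to-reject trick is the more elementary argument. Conversely, your version has the mild aesthetic advantage that accept and reject are handled symmetrically and the protocol genuinely ``commits'' in both cases rather than relying on an unchanged initial value.
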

\begin{proof}
By considering Theorem \ref{the:iplm}, it suffices to show that $SNSPACE(n\log n)$ is a subset of $SIPLM$. We have already shown that the IPM model can simulate a deterministic TM $\mathcal{M}$ of $\mathcal{O}(n\log n)$ space by using $\mathcal{O}(\log n)$ space (Theorem \ref{the:psize}). We now present some modifications that will allow us to simulate a nondeterministic TM $\mathcal{N}$ of the same memory size. Keep in mind that $\mathcal{N}$ is a decider for some predicate in $SNSPACE(n\log n)$, thus, it always halts. Upon initialization, each agent enters a reject state (writes $0$ to its output tape) and the simulation is carried out as in the case of $\mathcal{M}$.

Whenever a nondeterministic choice has to be made, the corresponding agent gets ready and waits for participating in an interaction. The id of the other participant will provide the nondeterministic choice to be made. One possible implementation of this idea is the following. Since there is a fixed upper bound on the number of nondeterministic choices (independent of the population size), the agents can store them in their memories. Any time a nondeterministic choice has to be made between $k$ candidates the agent assigns the numbers $0,1,\ldots,k-1$ to those candidates and becomes ready for interaction. Assume that the next interaction is with an agent whose id is $i$. Then the nondeterministic choice selected by the agent is the one that has been assigned the number $i \mod k$. Fairness guarantees that, in this manner, all possible paths in the tree representing $\mathcal{N}$'s nondeterministic computation will eventually be followed.

Any time the simulation reaches an accept state, all agents change their output to 1 and the simulation halts. Moreover, any time the simulation reaches a reject state, it is being reinitiated. The correctness of the above procedure is captured by the following two cases.
\begin{enumerate}
\item \emph{If $\mathcal{N}$ rejects then every agent's output stabilizes to $0$}. Upon initialization, each agent's output is $0$ and can only change if $\mathcal{N}$ reaches an accept state. But all branches of $\mathcal{N}$'s computation reject, thus, no accept state is ever reached, and every agent's output forever remains to $0$.
\item \emph{If $\mathcal{N}$ accepts then every agent's output stabilizes to $1$}. Since $\mathcal{N}$ accepts, there is a sequence of configurations $S$, starting from the initial configuration $C$ that leads to a configuration $C^{\prime}$ in which each agent's output is set to $1$ (by simulating directly the branch of $\mathcal{N}$ that accepts). Notice that when an agent sets its output to $1$ it never alters its output tape again, so it suffices to show that the simulation will eventually reach $C^{\prime}$. Assume on the contrary that it does not. Since $\mathcal{N}$ always halts the simulation will be at the initial configuration $C$ infinitely many times. Due to fairness, by an easy induction on the configurations of $S$, $C^{\prime}$ will also appear infinitely many times, which leads to a contradiction. Thus the simulation will eventually reach $C^{\prime}$ and the output will stabilize to $1$.
\end{enumerate}
\qed
\end{proof}

Taking into consideration the above theorems we can conclude to the following exact characterization for the class $PLM$:
\begin{theorem}\label{plm_exact}
$PLM$ is equal to $SNSPACE(n\log n)$.
\end{theorem}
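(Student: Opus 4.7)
The plan is to combine the two inclusions already established in this section with the symmetry constraint coming from Remark~\ref{rem:sym}. The statement is an equality of classes, so I would split it into the two directions $PLM \subseteq SNSPACE(n\log n)$ and $SNSPACE(n\log n) \subseteq PLM$ and verify that the previous theorems yield exactly these containments, modulo a small observation about symmetry.

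For the direction $PLM \subseteq SNSPACE(n\log n)$, Theorem~\ref{the:plmup} already gives $PLM \subseteq NSPACE(n\log n)$ by the standard guess-and-verify simulation of reachable population configurations, using Immerman--Szelepcs\'enyi to handle the ``no bad configuration is reachable from the stable one'' clause of stable computation. What remains is to note that every $p \in PLM$ is automatically symmetric: since the agents start identical, have no global knowledge, and live on a complete communication graph, any permutation of the input is indistinguishable to the system, so $p(x) = p(x')$ whenever $x'$ is a permutation of $x$ (this is exactly Remark~\ref{rem:sym}). Hence $PLM \subseteq SNSPACE(n\log n)$.

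For the reverse direction $SNSPACE(n\log n) \subseteq PLM$, I would just cite Theorem~\ref{the:lowPLM}, which was established by first using Theorem~\ref{the:iplm} to reduce to the $SIPLM$ model (where unique consecutive ids and the population size are available), and then simulating a nondeterministic $\mathcal{O}(n\log n)$-space TM distributively: each agent holds $\mathcal{O}(\log n)$ cells of the simulated tape indexed by its id modulo $n$, control is passed along the id ring whenever the simulated head moves, nondeterministic branching is implemented by reading the id of the next interaction partner modulo the number of choices, and rejecting branches trigger a reinitialization so that fairness eventually exhibits an accepting branch if one exists.

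Combining the two containments yields $PLM = SNSPACE(n\log n)$, which is the claim. No real obstacle arises here since both inclusions are already proved; the only point to be careful about is explicitly invoking the symmetry remark to sharpen the upper bound from $NSPACE(n\log n)$ to its symmetric restriction $SNSPACE(n\log n)$, so that the two sides of the equality match.
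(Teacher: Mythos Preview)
Your proposal is correct and follows essentially the same approach as the paper: combine the upper bound of Theorem~\ref{the:plmup} with Remark~\ref{rem:sym} to get $PLM \subseteq SNSPACE(n\log n)$, and invoke Theorem~\ref{the:lowPLM} for the reverse inclusion. In fact your citation for the lower bound is more precise than the paper's, which (apparently by typo) references Theorem~\ref{the:psize} while attributing to it the nondeterministic inclusion actually proved in Theorem~\ref{the:lowPLM}.
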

\begin{proof}
Follows from Theorems \ref{the:psize}, which establishes that $SNSPACE(n\log n)\subseteq PLM$, and \ref{the:plmup}, which establishes that $PLM\subseteq NSPACE(n\log n)$; but for all $p\in PLM$, $p$ is symmetric (Remark \ref{rem:sym}), thus, $PLM\subseteq SNSPACE(n\log n)$.
\qed
\end{proof}

\subsection{Simulating Nondeterministic Recognizers}

Here, we generalize the preceding ideas to nondeterministic \emph{recognizers} of $\mathcal{O}(n\log n)$ space. There is a way to stably compute predicates in $SSPACE(n\log n)$ even when the corresponding TM $N$ might loop, by carrying out an approach similar to the one given above. However, since neither an accept nor a reject state may be reached, the simulation is nondeterministically reinitiated at any point that is not in such a state. This choice is also obtained by the nondeterministic interactions. For example, whenever the agent that carries out the simulation interacts with an agent that has an id that is even, the simulation remains unchanged, otherwise it is reinitiated. Notice however that during the simulation, any agent having id $i$ may need to interact with those having neighboring ids, so those must not be able to cause a reinitiation in the simulation.

Correctness of the above procedure is captured by similar arguments to those in the proof of Theorem \ref{the:lowPLM}. If $\mathcal{N}$ never accepts, then no output tape will ever contain a $1$, so the simulation stabilizes to $0$. If $\mathcal{N}$ accepts there is a sequence of configurations $S$, starting from the initial configuration $C$ that leads to a configuration $C^{\prime}$ in which each agent's output is set to $1$. Observe that this is a ``good'' sequence, meaning that no reinitiations take place, and, due to fairness, it will eventually occur.

\section{Space Hierarchy of the PM Model}
In this section we study the behaviour of PM model for various space bounds.


\begin{theorem} \label{the:plmfn}
For any function $f:\mathbb{N} \rightarrow \mathbb{N}$, any predicate in $PMSPACE(f(n))$ is also in $SNSPACE(2^{f(n)}($ $f(n) + \log n))$.
\end{theorem}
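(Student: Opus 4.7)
The plan is to generalize Theorem \ref{the:plmup} by encoding a population configuration not as a list of $n$ per-agent configurations, but as a count vector over the space of possible agent configurations. Let $\mathcal{A}$ be a PM protocol stably computing $p$ with each agent using $O(f(n))$ space. An agent configuration is specified by its state in $Q$, its four tape contents (each of length $O(f(n))$ over the constant-size alphabet $\Gamma$), its four head positions, and the working flag, so the total number $M$ of distinct agent configurations is $2^{O(f(n))}$. Fix a canonical enumeration $A_1,\ldots,A_M$. A population configuration, modulo agent identities, is then uniquely described by the $M$-tuple $(c_1,\ldots,c_M)$ with $c_i \in \{0,\ldots,n\}$ counting how many agents are currently in configuration $A_i$. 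This vector fits in $O(2^{f(n)} \log n)$ bits, and an extra $O(f(n))$ bits of scratch suffice to expand any index $i$ into the explicit description of $A_i$, apply $\delta$ or $\gamma$ together with the message swap, and translate the result back to an index; the total workspace therefore fits in $O(2^{f(n)}(f(n)+\log n))$.

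I would then construct a nondeterministic TM $\mathcal{N}$ for $L_p$ within this budget, in the style of Theorem \ref{the:plmup}. Given $x$, $\mathcal{N}$ initializes $C_0$ by reading $x$: for each occurrence of $\sigma \in X$, it increments the count of the unique agent configuration whose state is $q_0$, whose working tape starts with $\sigma$, and whose flag is $1$. From the current $C$, $\mathcal{N}$ nondeterministically guesses a successor $C'$ by either (a) an agent transition: pick $i$ with $c_i \geq 1$, set $c_i \leftarrow c_i - 1$ and $c_j \leftarrow c_j + 1$ where $A_j = \delta(A_i)$; or (b) an encounter: pick indices $i,j$ with $c_i \geq 1$ and $c_j \geq 1$ (and $c_i \geq 2$ if $i=j$), compute the resulting pair $(A_{i'},A_{j'})$ from $\gamma$ and the message swap, and apply the two decrements and two increments. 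Whenever $\mathcal{N}$ reaches a configuration $C$ in which every $i$ with $c_i > 0$ has $A_i$'s output tape equal to $p(x)$, it invokes the Immerman--Szelepcs\'enyi closure of $NSPACE$ under complement (applicable because $2^{f(n)}(f(n)+\log n) \geq \log n$) to certify that no configuration reachable from $C$ violates this invariant; if so, it accepts.

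Correctness of the count representation rests on Remark \ref{rem:sym}: every stably computable PM predicate is symmetric, so discarding agent identities loses no information about $p(x)$ or about stability. The main obstacle I anticipate is purely book-keeping: the $i=j$ case in (b) requires $c_i \geq 2$ and two net decrements to the same entry, and the on-the-fly index-to-configuration expansion used inside $\delta$ and $\gamma$ must reuse the same $O(f(n))$ scratch region rather than materializing all $M$ descriptions at once, as otherwise the space budget would blow up by a factor of $M$. With these details in place, the space accounting yields $p \in SNSPACE(2^{f(n)}(f(n)+\log n))$, as required.
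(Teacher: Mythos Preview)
Your proposal is correct and follows essentially the same approach as the paper: replace the per-agent listing of configurations by a count vector over the $2^{O(f(n))}$ possible agent configurations, then rerun the reachability/stability argument of Theorem~\ref{the:plmup} within the resulting $O(2^{f(n)}(f(n)+\log n))$ space budget. The only cosmetic difference is that the paper materializes all agent configurations explicitly (contributing the $O(f(n)2^{f(n)})$ term), whereas you reconstruct them on the fly from their index; both fit the stated bound.
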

\begin{proof}
Take any $p\in PMSPACE(f(n))$. Let $\mathcal{A}$ be the PM protocol that stably computes predicate $p$ in space $\mathcal{O}(f(n))$. $L_p=\{(s_1,s_2,\ldots,s_n)\;|\; s_i\in X \mbox{ for all } i\in\{1,\ldots,n\} \mbox{ and }$ $p(s_1,s_2,\ldots,s_n)=1\}$ is the language corresponding to $p$ ($X\subset \Sigma^{*}$ is the set of input strings). We describe a NTM $\mathcal{N}$ that decides $L_p$ in $g(n) = \mathcal{O}(2^{f(n)}(f(n) + \log n))$ space.

Note that each agent uses memory of size $\mathcal{O}(f(n))$. So, by assuming a binary tape alphabet $\Gamma=\{0,1\}$ (the alphabet of the agents' tapes), an assumption which is w.l.o.g., there are $2^{\mathcal{O}(f(n))}$ different agent configurations (internal configurations) each of size $\mathcal{O}(f(n))$. $\mathcal{N}$ stores a population configuration by storing all these agent configurations, consuming for this purpose $\mathcal{O}(f(n)2^{f(n)})$ space, together with a number per agent configuration representing the number of agents in that agent configuration under the current population configuration. These numbers sum up to $n$ and each one of them requires $\mathcal{O}(\log n)$ bits, thus, $\mathcal{O}(2^{f(n)}\log n)$ extra space is needed, giving a total of $\mathcal{O}(2^{f(n)}(f(n) + \log n))$ space needed to store a population configuration. The reason that such a representation of population configurations suffices is that when $k$ agents are in the same internal configuration there is no reason to store it $k$ times. The completeness of the communication graph allows us to store it once and simply indicate the number of agents that are in this common internal configuration, that is, $k$.

Now $\mathcal{N}$ does what the deterministic TM of the Theorem \ref{the:plmup} does. The main difference is that it now store the population configurations according to the new representation we discussed above.
\qed
\end{proof}

\begin{theorem} [Symmetric Space Hierarchy Theorem] \label{the:nsymsh}
For any function $f:\mathbb{N} \rightarrow \mathbb{N}$, a symmetric language $L$ exists that is decidable in $\mathcal{O}(f(n))$ (non)deterministic space but not in $o(f(n))$ (non) deterministic space.
\end{theorem}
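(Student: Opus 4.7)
The plan is to reduce the statement to the classical (non-symmetric) space hierarchy theorem by picking the separating language over a \emph{unary} alphabet. If $\Sigma=\{a\}$, then for every $n$ the only string of length $n$ is $a^n$, whose only permutation is itself; hence every unary language is automatically symmetric in the sense of Section \ref{sec:pred}. So it suffices to exhibit a unary language $L\subseteq\{a\}^*$ that separates $DSPACE(f(n))$ from $DSPACE(o(f(n)))$ (and likewise for $NSPACE$), assuming of course the standard hypothesis that $f$ is space-constructible and $f(n)=\Omega(\log n)$.

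I would run the classical diagonalization argument directly over unary inputs. Fix an enumeration $M_1,M_2,\ldots$ of TMs, padded so that every machine appears for infinitely many indices. Define the diagonalizer $D$ on input $a^n$ as follows: use space-constructibility of $f$ to mark off $f(n)$ tape cells; decode $M_n$, which has an $O(\log n)$-bit description (absorbed into the budget because $f(n)=\Omega(\log n)$); simulate $M_n$ on $a^n$ within a space cap of $f(n)$ and a step counter of $2^{O(f(n))}$ to detect loops; accept iff the simulation halts within these resources and $M_n$ rejects. Then $L(D)\in DSPACE(f(n))$. If some $M_j$ decided $L(D)$ in space $g(n)=o(f(n))$, pick $n$ large enough with $M_n=M_j$ so that all of $M_j$'s resources on $a^n$ fit strictly inside $D$'s budget; this yields $D(a^n)=\neg M_j(a^n)=\neg D(a^n)$, a contradiction. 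Hence $L(D)\in DSPACE(f(n))\setminus DSPACE(o(f(n)))$, and is symmetric for free.

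For the nondeterministic version I would carry out the same construction, but use the Immerman-Szelepcs\'enyi theorem to complement $M_n$'s nondeterministic behaviour inside $NSPACE(O(f(n)))$, exactly as in the upper-bound proof of Theorem \ref{the:plmup}. This yields a unary language in $NSPACE(f(n))\setminus NSPACE(o(f(n)))$. If one prefers a separating language over a larger alphabet $X$ (for use inside a PM-hierarchy argument), I would define $L'=\{x\in X^*: |x|\in S\}$, where $S\subseteq\mathbb{N}$ is the separating unary set above: $L'$ is symmetric (it depends only on $|x|$), and its (non)deterministic space complexity is the same as $S$'s up to the $O(\log n)$ overhead of computing $|x|$, which is absorbed into $f(n)$.

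The main obstacle is the usual bookkeeping in any space hierarchy proof: bounding the simulation overheads (decoding $M_n$, enforcing the space cap, counting steps to detect looping) so that the constant blow-up does not erase the $o(f(n))$ gap between $D$'s complexity and that of any hypothetical small-space decider for $L(D)$. Space-constructibility of $f$ and the $f(n)=\Omega(\log n)$ regime used throughout this paper handle both concerns. The nondeterministic part carries the additional subtlety of complementation, which is discharged by Immerman-Szelepcs\'enyi exactly as in Theorem \ref{the:plmup}.
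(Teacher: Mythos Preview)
Your proposal is correct and matches the paper's approach exactly: both observe that a unary separating language for the space hierarchy is automatically symmetric, so the ordinary hierarchy theorem already furnishes the required $L$. The paper simply cites Geffert~\cite{Ge03} for the tally separation language rather than redoing the diagonalization; Geffert's version dispenses with the space-constructibility hypothesis you impose, which is why the paper can phrase the statement for arbitrary $f$, but for the uses made of this theorem elsewhere in the paper (all in the regime $f(n)=\Omega(\log n)$) your assumptions are harmless.
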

\begin{proof}
Follows immediately from the unary (tally) separation language presented in \cite{Ge03} and the fact that any unary language is symmetric.
\qed
\end{proof}

\begin{corollary} \label{cor:ologn}
For any function $f(n)=o(\log n)$ it holds that $PMSPACE(f(n))\subsetneq SN\-SPACE(nf(n))$.
\end{corollary}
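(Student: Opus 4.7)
The plan is to sandwich $PMSPACE(f(n))$ between two nondeterministic space classes using Theorem~\ref{the:plmfn} and then separate the outer class from the inner bound via the symmetric space hierarchy (Theorem~\ref{the:nsymsh}). The key ingredient is a growth-rate comparison that shows the upper-bound function from Theorem~\ref{the:plmfn} is asymptotically dominated by $nf(n)$ whenever $f(n)=o(\log n)$.

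First I would verify that $2^{f(n)}(f(n)+\log n) = o(nf(n))$ under the hypothesis $f(n)=o(\log n)$ (and the standing convention $f(n)\geq 1$). Since $f(n)=o(\log n)$, we can write $2^{f(n)} = n^{f(n)/\log n} = n^{o(1)}$, so in particular $2^{f(n)}\leq n^{1/2}$ for all sufficiently large $n$. Combined with $f(n)+\log n = O(\log n)$ and $nf(n)\geq n$, this gives $2^{f(n)}(f(n)+\log n) = O(\sqrt{n}\log n) = o(n) = o(nf(n))$. With this estimate in hand, Theorem~\ref{the:plmfn} immediately yields the (non-strict) inclusion $PMSPACE(f(n)) \subseteq SNSPACE(2^{f(n)}(f(n)+\log n)) \subseteq SNSPACE(nf(n))$, which is one half of the corollary.

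For strictness, I would apply Theorem~\ref{the:nsymsh} with the bound $g(n)=nf(n)$: it produces a symmetric language $L$ that is decidable in $\mathcal{O}(nf(n))$ nondeterministic space but not in $o(nf(n))$ nondeterministic space. Because $L$ is symmetric, we have $L\in SNSPACE(nf(n))$. On the other hand, since $2^{f(n)}(f(n)+\log n) = o(nf(n))$ by the estimate above, $L$ cannot be decided in $\mathcal{O}(2^{f(n)}(f(n)+\log n))$ nondeterministic space, so $L\notin SNSPACE(2^{f(n)}(f(n)+\log n))$. Applying Theorem~\ref{the:plmfn} in contrapositive form then forces $L\notin PMSPACE(f(n))$. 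Hence $L$ is an explicit witness in $SNSPACE(nf(n))\setminus PMSPACE(f(n))$, proving the inclusion is strict.

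The only subtle step is the asymptotic comparison $2^{f(n)}(f(n)+\log n) = o(nf(n))$, which has to hold uniformly across \emph{all} sub-logarithmic $f$; fortunately, the identity $2^{f(n)} = n^{f(n)/\log n}$ reduces this to the elementary fact that $n^{o(1)}\log n = o(n)$. Once this comparison is established, the corollary is a direct composition of Theorem~\ref{the:plmfn} (as the upper bound on $PMSPACE(f(n))$) with Theorem~\ref{the:nsymsh} (as the diagonal separator inside $SNSPACE(nf(n))$).
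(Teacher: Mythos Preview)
Your proposal is correct and follows the same route as the paper: invoke Theorem~\ref{the:plmfn} for the inclusion, invoke Theorem~\ref{the:nsymsh} for the strict separation, and reduce everything to the asymptotic estimate $2^{f(n)}(f(n)+\log n)=o(nf(n))$. Your justification of that estimate via $2^{f(n)}=n^{f(n)/\log n}=n^{o(1)}\le n^{1/2}$ is in fact cleaner than the paper's, which writes the bound as ``$o(n)\mathcal{O}(\log n)$ obviously grows slower than $n\cdot o(\log n)$''---a statement that is only ``obvious'' once one remembers the $o(n)$ term is actually $n^{o(1)}$.
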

\begin{proof}
By considering Theorem \ref{the:plmfn}, the symmetric space hierarchy theorem (Theorem \ref{the:nsymsh}) it suffices to show that $2^{f(n)}(f(n) + \log n)=o(nf(n))$ for $f(n)=o(\log n)$. We have that
\begin{equation*}
2^{f(n)}(f(n) + \log n)=2^{o(\log n)}\mathcal{O}(\log n)=o(n)\mathcal{O}(\log n),
\end{equation*}
which obviously grows slower than $nf(n)=n\cdot o(\log n)$.
\qed
\end{proof}

So, for example, if $f(n)=\log\log n$, then $PMSPACE(\log\log n)\subseteq SNSPA\-CE(\log^{2} n)$ which is strictly smaller than $SNSPACE(nf(n))=SNSPA\-CE(n\log\log n)$ by the symmetric space hierarchy theorem. Another interesting example is obtained by setting $f(n)=c$. In this case, we obtain the $SNSPA\-CE(\log n)=SNL$ (for \emph{Symmetric} $NL$) upper bound for population protocols of Angluin \emph{et al.} \cite{AADFP06} (in \cite{AAE06} they obtained a better bound which is the semilinear predicates and constitutes an exact characterization for population protocols).

Although the above upper bounds are relatively tight for $f(n) \leq \log n$ space functions, the bounds for $f(n) > \log n$ have worse behavior in relation to the $nf(n)$. In the theorem below, tighter upper bounds are shown by working with a different representation of population configurations that holds each agent's internal configuration explicitly.

\begin{theorem} \label{the:ppm_upper}
For any function $f(n)$ it holds that $PMSPACE(f(n))$ is a subset of $SNSPACE(nf(n))$.
\end{theorem}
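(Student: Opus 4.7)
The plan is to mirror the proof of Theorem \ref{the:plmup}, but to store population configurations in the explicit per-agent format instead of the multiplicity format used in Theorem \ref{the:plmfn}. This trades the sharper $2^{f(n)}(f(n)+\log n)$ bound for a cleaner $nf(n)$ bound that works uniformly in $f$.

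Fix any $p \in PMSPACE(f(n))$ and let $\mathcal{A}$ be the PM protocol that stably computes $p$ in $\mathcal{O}(f(n))$ space per agent. I would describe a NTM $\mathcal{N}$ that decides $L_p$ as follows. A population configuration is written on $\mathcal{N}$'s tape as an $n$-place vector of agent configurations, one block per agent; since the tape alphabet of $\mathcal{A}$ has constant size and each agent uses $\mathcal{O}(f(n))$ cells across its four tapes and control unit, the whole vector fits in $\mathcal{O}(nf(n))$ cells. Since the only internal work $\mathcal{N}$ will do between steps is to simulate one application of $\delta$ or $\gamma$ and compare two adjacent configuration vectors, $\mathcal{O}(f(n)) + \mathcal{O}(\log n)$ additional scratch cells suffice, which is absorbed into the $\mathcal{O}(nf(n))$ budget (we may assume $f(n) \geq 1$, since even the control state alone takes $\Omega(1)$ space).

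On input $x \in X^{*}$ of length $n$, $\mathcal{N}$ then runs exactly the two-phase search of Theorem \ref{the:plmup}: it nondeterministically guesses a sequence of configurations from the initial one, verifying at each step that $C_i \rightarrow C_{i+1}$ is a legal agent transition (checking $\delta$ on the one agent that changed) or a legal encounter (checking readiness, the state update via $\gamma$, the message swap, and that all other agents are unchanged), until it lands on a configuration $C$ whose every output tape spells the same bit $b$. It then invokes the Immerman–Szelepcs\'enyi complementation to check that no configuration reachable from $C$ violates this common output. $\mathcal{N}$ accepts iff such a $C$ is reached with $b = 1$. By the definition of stable computation, $\mathcal{N}$ accepts precisely $L_p$, and the whole procedure runs in $\mathcal{O}(nf(n))$ space.

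The only subtle point — the main thing to check rather than a real obstacle — is applicability of Immerman–Szelepcs\'enyi, which requires the space function to be $\Omega(\log n)$. This is automatic here because $nf(n) \geq n \geq \log n$ once $f(n) \geq 1$, so the complemented reachability fits in the same $\mathcal{O}(nf(n))$ budget. Hence $p \in NSPACE(nf(n))$, and by Remark \ref{rem:sym} every predicate in $PMSPACE(f(n))$ is symmetric, so in fact $p \in SNSPACE(nf(n))$, completing the inclusion $PMSPACE(f(n)) \subseteq SNSPACE(nf(n))$.
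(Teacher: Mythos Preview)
Your proposal is correct and follows essentially the same approach as the paper: represent a population configuration as an $n$-place vector of $\mathcal{O}(f(n))$-size agent configurations, nondeterministically guess a run to an all-agreeing configuration, and use Immerman--Szelepcs\'enyi (legitimate since $nf(n)\geq\log n$) for the stability check. If anything, your write-up is more explicit than the paper's, which simply refers back to Theorem~\ref{the:plmup} and notes the $\mathcal{O}(nf(n))$ storage and the applicability of closure under complement.
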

\begin{proof}
This proof is similar to the proof establishing that $PLM$ is a subset of $SN-SPACE(n\log n)$ in Theorem \ref{the:plmup}. It can be easily shown, using similar arguments to those of the previously stated proof, that all predicates in $PMSPACE(f(n))$ are also in $SNSPACE(nf(n))$. In particular, there is a nondeterministic $TM$ $\mathcal{M}$ of space $\mathcal{O}(nf(n))$ that can decide a language corresponding to any predicate $p \in$ $PMSPACE(f(n))$. $\mathcal{M}$ holds the internal configurations of the $n$ agents each of which needs $\mathcal{O}(f(n))$ space and therefore each configuration (of the entire population) requires $\mathcal{O}(nf(n))$ space in $\mathcal{M}$'s tape. $\mathcal{M}$ starts with the initial configuration $C$, guesses the next $C^{\prime}$ and checks whether it has reached a configuration in which $p$ holds. When $\mathcal{M}$ reaches such a configuration $C$ it computes the complement of a similar reachability problem: it verifies that there exists no configuration reachable from $C$ in which $p$ is violated. This condition can also be verified in $\mathcal{O}(nf(n))$ space since $NSPACE$ is closed under complement for all space functions $g(n) = \Omega(\log n)$ (Immerman-Szelepcs\'enyi theorem \cite{Pa94}, pages 151-153). Note that for any reasonable function $f(n)$, $g(n)=nf(n) \geq \log n$, as required by the Immerman-Szelepcs\'enyi theorem.
\qed
\end{proof}

The upper bounds shown above are obviously better for functions $f(n) = \Omega(\log n)$ than those presented by Theorem \ref{the:plmfn}. Note however, that Theorem \ref{the:ppm_upper} also holds for $f(n) = o(\log n)$ and for those space constructible functions the upper bounds are worse than those of Theorem \ref{the:plmfn}. In order to realize that consider the function $f(n) = c$ (the memory of each agent is independent of the population size, thus this is the basic PP model). According to Theorem \ref{the:ppm_upper} the upper bound is the trivial $SNSPACE(n)$, whereas the Theorem \ref{the:plmfn} decreases the upper bound to $SNSPACE(\log n)$. This behavior is expected due to the configuration representation of the population used by those theorems. When the configuration is stored as $n$-vector where each element of the vector holds the internal configuration of an agent (representation used in Theorem \ref{the:ppm_upper}) then as the memory size grows the additional space needed is a factor $n$ of that growth. On the other hand, when a configuration is represented as a vector of size equal to the number of all possible internal configurations where each element is the number of agents that are in the corresponding internal configuration (as in Theorem \ref{the:plmfn}) then the size of the vector grows exponentially to the memory growth. Therefore tighter upper bounds are obtained by Theorem \ref{the:ppm_upper} for functions $f(n) = \Omega(\log n)$ and by Theorem \ref{the:plmfn} for $f(n) = o(\log n)$. Note that for $f(n) = \log n$ the bounds by both theorems are the same.

The next theorem shows that for space functions $f(n) = \Omega(\log n)$, the PM model can simulate a nondeterministic TM of space $\mathcal{O}(nf(n))$ using $\mathcal{O}(f(n))$ in each agent. The intuition here is that with at least $\log n$ memory in each agent we can always assign unique ids and propagate the size of the population, as shown in Section \ref{sec:uids}, and therefore we can always organize the population into a nondeterministic TM where each of the $n$ agents provides $\mathcal{O}(f(n))$ space for the simulation.

\begin{theorem} \label{the:ppm_lower}
For any $f(n)=\Omega(\log n)$ it holds that $SNSPACE(nf(n))$ is a subset $PMSPACE(f(n))$.
\end{theorem}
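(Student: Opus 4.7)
The plan is to generalize the proof of Theorem \ref{the:lowPLM} in a straightforward way, exploiting the fact that $f(n) = \Omega(\log n)$ guarantees enough per-agent memory to run the id-assignment machinery of Section \ref{sec:uids}. Concretely, pick any $p \in SNSPACE(nf(n))$ and let $\mathcal{N}$ be a nondeterministic TM deciding $L_p$ in $\mathcal{O}(nf(n))$ space. I would first invoke protocol $\mathcal{I}$ from Protocol \ref{prot:ids} as a subroutine; by Theorem \ref{the:iplm} (whose space cost $\mathcal{O}(\log n)$ is absorbed into $\mathcal{O}(f(n))$ since $f(n) = \Omega(\log n)$), each agent eventually knows both its unique id $i \in \{0,1,\ldots,n-1\}$ and the population size $n$.

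Once ids and $n$ are available, I would partition $\mathcal{N}$'s work tape (conceptually) into $n$ consecutive blocks of size $\Theta(f(n))$, assigning block $i$ to the agent with id $i$. The simulation then proceeds exactly as in the proof of Theorem \ref{the:psize}: the agent holding id $0$ initiates the simulation of $\mathcal{N}$ on the permuted input $x'$ gathered during id assignment (symmetry of $p$ rescues us from the fact that $x'$ need not equal $x$); whenever $\mathcal{N}$'s head moves past the current block's boundary, the agent pauses, writes the current state of $\mathcal{N}$ to its outgoing message tape, and hands over control to its id-neighbor (id $\pm 1 \bmod n$), which resumes the simulation on its own block. Since each block has size $\mathcal{O}(f(n))$ and the id plus bookkeeping information also fits in $\mathcal{O}(f(n))$ space (again using $f(n) = \Omega(\log n)$), the total per-agent memory is $\mathcal{O}(f(n))$ as required.

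Nondeterministic transitions are resolved exactly as in Theorem \ref{the:lowPLM}: whenever $\mathcal{N}$ must choose among $k$ successors, the simulating agent becomes ready and uses the id of its next interaction partner, taken modulo $k$, to pick a branch. Fairness of the scheduler ensures that every branch of $\mathcal{N}$'s nondeterministic computation tree is eventually followed. The output conventions are also copied verbatim from Theorem \ref{the:lowPLM}: upon initialization each agent writes $0$ to its output tape; an accepting configuration of $\mathcal{N}$ triggers the unanimous output $1$ (and freezes further changes), while rejection causes the simulation to restart from the initial configuration. The two correctness cases go through unchanged: if $\mathcal{N}$ rejects on $x'$ then the output never flips off $0$; if $\mathcal{N}$ accepts, fairness together with the fact that $\mathcal{N}$ always halts implies that an accepting branch is reached infinitely often, so every agent's output stabilizes to $1$.

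The only mildly delicate step is the handoff of control between agents whose ids are consecutive modulo $n$: one must ensure that the working flags, the in/out message tapes, and the ``current state of $\mathcal{N}$'' token are synchronized so that no message is lost or duplicated, and that handoffs across the wraparound $i=n-1 \leftrightarrow i=0$ do not confuse agents with the nondeterministic-choice partners. I expect this synchronization, rather than any conceptual issue, to be the main bookkeeping obstacle; all of it is handled by the same state-passing discipline used in Theorem \ref{the:psize}, which fits within $\mathcal{O}(f(n))$ memory per agent. Combining the three mechanisms — id-assignment, distributed tape simulation, and id-driven nondeterminism — yields a PM protocol stably computing $p$ in $\mathcal{O}(f(n))$ space, establishing $SNSPACE(nf(n)) \subseteq PMSPACE(f(n))$ for every $f(n) = \Omega(\log n)$.
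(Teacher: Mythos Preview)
Your proposal is correct and follows essentially the same approach as the paper: both argue that since $f(n)=\Omega(\log n)$, the id-assignment and population-size machinery of Section~\ref{sec:uids} (Theorem~\ref{the:iplm}) fits in $\mathcal{O}(f(n))$ space, after which the distributed NTM simulation of Theorems~\ref{the:psize} and~\ref{the:lowPLM} goes through unchanged with each agent hosting $\mathcal{O}(f(n))$ tape cells. Your write-up is in fact more detailed than the paper's own proof, which simply points back to those earlier theorems; the only cosmetic difference is that you describe the tape as partitioned into contiguous blocks whereas the paper's Theorem~\ref{the:psize} uses a round-robin (cell $nk+i_u$ to agent $i_u$) layout, but either scheme works.
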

\begin{proof}
In Theorem \ref{the:lowPLM} a nondeterministic TM of $\co(n \log n)$ space is simulated by a PALOMA protocol provided that all agents know the population size and have unique ids. In addition in Section \ref{sec:uids} a construction is provided that allows the PALOMA protocols to assume ids and population size knowledge. The same construction can be used for the PM protocols which use $\mathcal{O}(f(n))$ space for $f(n)=\Omega(\log n)$. The reason is that in this space every agent can store a unique id as well as the population size. Therefore the same protocol $\mathcal{I}$ of Theorem \ref{the:iplm} can be used here.
\removed{

Therefore the same protocol $\mathcal{I}$ (Theorem 1 of \cite{CMNPS10}) can be used by any protocol $\mathcal{B}$ of PM$(f(n))$ that simulates a protocol $\mathcal{A}$, which assumes the existence of unique ids and population size knowledge. Firstly, $\mathcal{I}$ assigns unique consecutive ids $\{0,1,\ldots,n-1\}$ to the agents (Lemma 1 in \cite{CMNPS10}). It also informs them of the correct population size, which is propagated in the clockwise direction of a virtual ring (formed w.r.t. the agents' ids) and reassures that each agent who learns the correct population size is reinitialized (Lemma 2 in \cite{CMNPS10}). When an agent is reinitialized, it restarts execution of $\mathcal{A}$ using the initial input and the newly computed population size. This is a key process which guarantees that agents with incorrect knowledge of population size (non-reinitialized), and therefore with outdated information w.r.t. $\mathcal{A}$'s execution, cannot influence the computation of correctly informed (reinitialized) agents (Lemma 3 in \cite{CMNPS10}). In addition, once all agents have learned the correct population size no more reinitializations take place allowing $\mathcal{B}$ to proceed the execution of $\mathcal{A}$. Finally, since $\mathcal{I}$ allows a finite number of steps in $\mathcal{A}$'s execution no agent ever falls in an infinite loop and reinitializations can always be applied (Lemma 4 in \cite{CMNPS10}). The above analysis states that any protocol of PM$(f(n))$ can assume knowledge of ids and population size by agents. For this reason, using Theorem 10 of \cite{CMNPS10} a nondeterministic TM $\mathcal{N}$ of space $\mathcal{O}(nf(n))$ can be simulated. Initially agents enter the reject state (set their output to 0), form a chain w.r.t. their ids and gather all input strings to the left of the collective memory of this chain so that it looks like the tape of $\mathcal{N}$. The simulation of $\mathcal{N}$ starts on agent $0$ and whenever the head of $\mathcal{N}$ needs to read more memory the simulation proceeds to the next agent of the chain and so on. In addition when a nondeterministic choice has to be made the simulating agent gets ready, waits to participate in an interaction and when this happens, uses the id of the other participant for the choice. Any time the simulation reaches an accept state, all agents change their output to 1 and the simulation halts. Moreover, any time the simulation reaches a reject state, it is being re-initiated. Due to fairness all paths of $\mathcal{N}$'s nondeterministic computation will be eventually followed.

}
\qed
\end{proof}

From the previous two theorems we get a generalization of the exact characterization for all $PMSPACE($ $f(n))$, when $f(n) = \Omega(\log n)$. It is formally stated as:

\begin{theorem} \label{the:wlogn}
For any $f(n)=\Omega(\log n)$ it holds that $PMSPACE(f(n))=SN\-SPACE(nf(n))$.
\end{theorem}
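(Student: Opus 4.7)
The plan is to establish the stated equality by combining the two inclusions already proved in this section and then ruling out the non-symmetric predicates on the left-hand side via the inherent symmetry of PM computations. Since the result is an exact characterization, I need to show containment in both directions.

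For the forward direction $PMSPACE(f(n))\subseteq SNSPACE(nf(n))$, I would simply invoke Theorem \ref{the:ppm_upper}, which is stated without any restriction on $f$ and thus in particular applies when $f(n)=\Omega(\log n)$. Note that this theorem already places the image inside the \emph{symmetric} class $SNSPACE(nf(n))$, but if one wants to be careful about why the symmetry is present, the justification is that every predicate stably computable by a PM protocol on a complete communication graph must be symmetric (Remark \ref{rem:sym}): agents are initially identical, so any permutation of the input assignment yields an execution indistinguishable from the original, forcing identical outputs.

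For the reverse direction $SNSPACE(nf(n))\subseteq PMSPACE(f(n))$, I would quote Theorem \ref{the:ppm_lower}, whose hypothesis $f(n)=\Omega(\log n)$ matches exactly the hypothesis of the present theorem. This is the substantive half of the characterization, as it requires the unique-id assignment of Section \ref{sec:uids} (which needs at least $\log n$ space per agent to store ids and the population size) together with the distributed simulation of a nondeterministic TM of space $\mathcal{O}(nf(n))$ spread across the $n$ agents, each contributing $\mathcal{O}(f(n))$ cells. All the work is already done in that theorem, so here I would only cite it.

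Putting the two inclusions together immediately yields $PMSPACE(f(n))=SNSPACE(nf(n))$ whenever $f(n)=\Omega(\log n)$. There is no real obstacle at this stage: the theorem is essentially a one-line consequence of Theorems \ref{the:ppm_upper} and \ref{the:ppm_lower}, exactly analogous to how Theorem \ref{plm_exact} was derived earlier from Theorems \ref{the:psize} and \ref{the:plmup} in the special case $f(n)=\log n$. The only subtlety worth flagging is to make sure the reader sees why $f(n)=\Omega(\log n)$ is used only in the lower-bound direction (for id assignment and the Immerman--Szelepcsényi step in the upper bound), not in the upper-bound direction, so that the characterization is genuinely tight throughout the stated range.
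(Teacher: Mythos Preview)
Your proposal is correct and follows exactly the paper's approach: the paper's proof is the single line ``It holds from theorems \ref{the:ppm_upper} and \ref{the:ppm_lower},'' and you have simply unpacked that citation into its two inclusions with the appropriate justifications. The extra remarks you add about symmetry via Remark \ref{rem:sym} and the analogy with Theorem \ref{plm_exact} are accurate elaborations, though not needed since Theorem \ref{the:ppm_upper} already lands in $SNSPACE(nf(n))$.
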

\begin{proof}
It holds from theorems \ref{the:ppm_upper} and \ref{the:ppm_lower}.
\qed
\end{proof}

Thus, by considering together Corollary \ref{cor:ologn} and Theorem \ref{the:wlogn} we obtain the following corollary.
\begin{corollary}\label{cor:pm_thres}
$f(n)=\Theta(\log n)$ acts as a threshold for the computational power of the PM model.
\end{corollary}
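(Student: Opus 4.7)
The plan is to read this as a pure bookkeeping corollary that merely juxtaposes the two regimes already established. First I would record the high-memory regime: Theorem~\ref{the:wlogn} gives the exact characterization $PMSPACE(f(n)) = SNSPACE(nf(n))$ whenever $f(n) = \Omega(\log n)$. This says that once each agent has at least logarithmic per-agent space, the aggregate $nf(n)$ bits distributed across the population are fully usable, and the PM model saturates the symmetric nondeterministic class of that size.

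Next I would record the low-memory regime: Corollary~\ref{cor:ologn} rules out the equality as soon as $f(n) = o(\log n)$, giving $PMSPACE(f(n)) \subsetneq SNSPACE(nf(n))$. Put side by side, these two statements express exactly the threshold behavior claimed: at and above $\Theta(\log n)$ we have equality with the ``$n$-fold lifted'' symmetric nondeterministic class, while strictly below we lose this equality. I would write the proof as a single sentence invoking these two results, together with a brief remark pointing to the intuition behind the sharp transition, namely that the upgrade from strict inclusion to equality is powered precisely by the ability, granted by a logarithmic budget, to store unique consecutive ids and the population size (Theorem~\ref{the:iplm}), and thereby to organize the agents into a distributed nondeterministic TM as in Theorem~\ref{the:lowPLM}; neither of these organizational tools is available under an $o(\log n)$ budget.

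There is essentially no obstacle: both inputs to the corollary have been proved in the sections immediately preceding, and the only genuine step is a clarification of terminology. The one mild subtlety worth flagging in the write-up is the informal word ``threshold'': I would make explicit that what changes at $\Theta(\log n)$ is the tightness of the characterization $PMSPACE(f(n)) \stackrel{?}{=} SNSPACE(nf(n))$, not a discontinuous jump in raw computational power, so that the reader does not misread the corollary as asserting more than Corollary~\ref{cor:ologn} and Theorem~\ref{the:wlogn} jointly give.
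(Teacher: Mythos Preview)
Your proposal is correct and matches the paper's approach exactly: the paper states this corollary as an immediate consequence of Corollary~\ref{cor:ologn} and Theorem~\ref{the:wlogn}, without any further proof. Your additional remarks about the underlying intuition (via Theorems~\ref{the:iplm} and~\ref{the:lowPLM}) and your caveat about the meaning of ``threshold'' go beyond what the paper writes but are accurate and helpful.
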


Considering the above analysis and Corollary \ref{cor:pm_thres} it is worth noting that the PALOMA protocols, seem to belong to a golden section between realistic requirements, w.r.t. implementation, and computational power offered.

\begin{theorem}[Space Hierarchy Theorem of the PM model] \label{the:pmsh}
For any two functions $f,g:\mathbb{N} \rightarrow \mathbb{N}$, where $f(n) = \Omega(\log n)$ and $g(n) = o(f(n))$ there is a predicate $p$ in $PMSPACE(f(n))$  but not in $PMSPACE(g(n))$.
\end{theorem}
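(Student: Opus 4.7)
The plan is to derive the separation from the exact characterization established earlier in the paper, namely $PMSPACE(f(n)) = SNSPACE(nf(n))$ for $f(n) = \Omega(\log n)$ (Theorem~\ref{the:wlogn}), together with the symmetric space hierarchy (Theorem~\ref{the:nsymsh}). Under this lens, finding $p \in PMSPACE(f(n)) \setminus PMSPACE(g(n))$ reduces to finding a symmetric predicate in $SNSPACE(nf(n))$ whose complexity sits above the (symmetric) nondeterministic space class that contains $PMSPACE(g(n))$.

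First, I would apply Theorem~\ref{the:nsymsh} with space bound $nf(n)$ to obtain a symmetric language $L$ decidable in $\mathcal{O}(nf(n))$ nondeterministic space but not in $o(nf(n))$ nondeterministic space. Letting $p$ be the predicate corresponding to $L$, Theorem~\ref{the:wlogn} immediately gives $p \in SNSPACE(nf(n)) = PMSPACE(f(n))$.

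Next, I would rule out $p \in PMSPACE(g(n))$ by splitting on the growth rate of $g$. If $g(n) = \Omega(\log n)$, Theorem~\ref{the:wlogn} applies to $g$ as well and yields $PMSPACE(g(n)) = SNSPACE(ng(n))$; the hypothesis $g(n) = o(f(n))$ then gives $ng(n) = o(nf(n))$, so by the choice of $p$ we get $p \notin SNSPACE(ng(n)) = PMSPACE(g(n))$. If instead $g(n) = o(\log n)$, Theorem~\ref{the:wlogn} is unavailable, so I would fall back on the looser envelope of Theorem~\ref{the:plmfn}, which yields $PMSPACE(g(n)) \subseteq SNSPACE(2^{g(n)}(g(n) + \log n))$; since $2^{g(n)} = 2^{o(\log n)} = o(n)$ and $g(n) + \log n = \mathcal{O}(\log n)$, the right-hand bound is $o(n \log n)$, which in turn is $o(nf(n))$ because $f(n) = \Omega(\log n)$, so $p$ is excluded again.

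The only non-routine step will be the asymptotic bookkeeping in the $g(n) = o(\log n)$ subcase, where the exact characterization is unavailable and one has to verify that the weaker upper bound from Theorem~\ref{the:plmfn} is still strictly dominated by $nf(n)$. Every other step is a direct invocation of a theorem already in hand, so the overall argument is essentially a two-line reduction to the symmetric hierarchy theorem once the case split is set up.
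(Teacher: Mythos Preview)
Your proposal is correct and follows essentially the same route as the paper: invoke the symmetric space hierarchy (Theorem~\ref{the:nsymsh}) at level $nf(n)$, pull the separating predicate into $PMSPACE(f(n))$ via Theorem~\ref{the:wlogn}, and then case-split on whether $g(n)=\Omega(\log n)$ or $g(n)=o(\log n)$. The only cosmetic difference is that in the sublogarithmic case the paper cites Corollary~\ref{cor:ologn} directly, whereas you unpack it by appealing to Theorem~\ref{the:plmfn} and redoing the $2^{g(n)}(g(n)+\log n)=o(n\log n)$ estimate.
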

\begin{proof}
From Theorem \ref{the:nsymsh} we have that for any such functions $f,g$ there is a language $L \in SNSPACE(nf(n))$ so that $L \not\in SNSPACE(ng(n))$. From Theorem \ref{the:wlogn} we have that $PMSPACE(f(n))= SNSPACE(nf(n))$, therefore $p_L \in PMSPACE(f(n))$ (where $p_L$ is the symmetric predicate that corresponds to the symmetric language $L$). We distinguish two cases. If $g(n) = \Omega(\log n)$ then from Theorem \ref{the:wlogn} we have that $PMSPACE(g(n))= SNSPACE(ng(n))$ and so $L \not\in PMSPACE(g(n))$ or equivalently $p_L \not\in PMSPACE($ $g(n))$. If $g(n) = o(\log n)$ then from Corollary \ref{cor:ologn} we have that $PMSPACE(g(n)) \subsetneq SNSPACE(ng(n))$ $\subsetneq SNSPACE(nf(n)) = PMSPACE(f(n))$.
\qed
\end{proof}

\section{A threshold in the Computability of the PM Model}\label{threshold}

In this section we explore the computability of the PM model when the protocols use $o(\log \log n)$ space. We show that this bound acts as a threshold constraining the corresponding protocols to compute only semilinear predicates.

Let $SEMILINEAR$ denote the class of the semilinear predicates. We will prove that $SEMILINEAR=PMSPACE(f(n))$ when $f(n)=o(\log\log n)$. Moreover, we will prove that this bound is tight, so that $SEMILINEAR \subsetneq PMSPACE(f(n))$ when $f(n)=\mathcal{O}(\log \log n)$.

\subsection{$o(\log\log n)$ Threshold}

\begin{definition}
Let $A$ be a PM protocol executed in a population $V$ of size $n$. Define \emph{agent configuration graph}, $R_{A,V}=\{U,W,F\}$, a graph such that:
\begin{itemize}
 \item $U$ is the set of the agent configurations that can occur in any execution of $A$ such that the working flag is set to 0.
 \item $W$ is the set of edges $(u,v),~u,v\in U$ so that an agent configuration $v$ can occur from a configuration $u$ via a single interaction. 
\item $F:W\rightarrow U\times \{i,r\}$ so that when an agent $k$ being in configuration $u$ enters configuration $v$ via a single interaction with an agent being in configuration $w$, and $k$ acts as $x\in \{i,r\}$ (initiator-responder) in the interaction, then $F((u,v))=\{w,x\}$, while $F((u,v))=\emptyset$ in any other case. $F$ is a function since we only deal with deterministic protocols.
\end{itemize}
\end{definition}

In other words, $U$ contains the configurations that an agent may enter when we don't take into consideration the ones that correspond to internal computation, while $W$ defines the transitions between those configuration through interactions defined by $F$. Notice that in general, $R_{A,V}$ depends not only on the protocol $A$, but on the population $V$. We call a $u \in U$ \emph{initial node} iff it corresponds to an initial agent configuration.

Because of the uniformity property, we can deduce the following theorem:

\begin{theorem}\label{the:subgraph}
Let $R_{A,V}$, $R_{A,V^{\prime}}$ be two agent configuration graphs corresponding in a protocol $A$ for two different populations $V$, $V^{\prime}$ of size $n$ and $n^{\prime}$ respectively, with $n<n^{\prime}$. Then, there exists a subgraph $R^*$ of $R_{A,V^{\prime}}$ such that $R^* = R_{A,V}$, which contains all the initial nodes of $R_{A,V^{\prime}}$
\end{theorem}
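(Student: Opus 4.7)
The plan is to exploit the \emph{uniformity} of the PM model: the transition functions $\delta$ and $\gamma$ depend only on the protocol $A$, not on the population size. I fix any injection $\iota\colon V\hookrightarrow V^{\prime}$, any input assignment for $V$, and any extension of it to an input assignment of $V^{\prime}$ that agrees with the original on $\iota(V)$. I then show that every execution $\pi$ of $A$ on $V$ starting from that input can be mirrored by an execution $\pi^{\prime}$ of $A$ on $V^{\prime}$ in which each agent $\iota(u)$ replicates the trajectory of $u$ step for step, while the $n^{\prime}-n$ extra agents in $V^{\prime}\setminus \iota(V)$ are never scheduled for an interaction and thus remain frozen at their initial post-initialization agent configurations. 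Because the agent configuration graph records what \emph{can} happen in \emph{some} execution rather than what must happen in every fair one, it suffices to exhibit finite prefixes realizing the desired configurations and transitions, so fairness plays no role here.

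Given this simulation, the argument breaks into three pieces. For the vertex set, if $u\in U_{A,V}$ then $u$ appears at some step of some execution $\pi$; the mirrored execution $\pi^{\prime}$ on $V^{\prime}$ places the corresponding agent into precisely the same agent configuration (with working flag $0$), hence $u\in U_{A,V^{\prime}}$. For edges and their labels, any edge $(u,v)\in W_{A,V}$ with $F((u,v))=\{w,\rho\}$ witnesses an interaction in which an agent in configuration $u$, acting in role $\rho\in\{i,r\}$, met a partner in configuration $w$ and then settled---after its internal $\delta$-computation---into configuration $v$. Replaying this interaction between the corresponding pair of agents in $\iota(V)$ under $\pi^{\prime}$ produces exactly the same outgoing configuration $v$ with the same partner $w$ and the same role $\rho$, because $\gamma$ and $\delta$ are identical across the two populations. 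Hence $W_{A,V}\subseteq W_{A,V^{\prime}}$ and the restriction of $F_{A,V^{\prime}}$ to $W_{A,V}$ agrees with $F_{A,V}$. Taking $R^{*}$ to be the subgraph of $R_{A,V^{\prime}}$ induced by the vertex set $U_{A,V}$ then yields $R^{*}=R_{A,V}$.

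It remains to check that every initial node of $R_{A,V^{\prime}}$ already lies in $U_{A,V}$. An initial node of either graph corresponds to an agent whose $\delta$-computation following input reception has just terminated---the first moment its working flag drops to $0$. The outcome of this deterministic internal computation is determined entirely by $q_0$, $\delta$, and the single input symbol $\sigma\in X$ written on the leftmost cell of the working tape, none of which depends on the population size. Consequently the set of initial nodes is in bijection with $X$ for both graphs, so the initial nodes of $R_{A,V^{\prime}}$ coincide with those of $R_{A,V}$ and are all contained in $R^{*}$.

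The main delicacy is ensuring that the internal $\delta$-computations triggered by each interaction unfold identically in $\pi$ and $\pi^{\prime}$. Since $\delta$ acts only on an agent's own state and tape contents, this is automatic: two agents starting such a computation from identical local configurations always terminate with identical local configurations. This is the one place where uniformity enters implicitly rather than explicitly, and it is what preserves the one-to-one correspondence between agents of $V$ and $\iota(V)$ throughout the mirrored execution.
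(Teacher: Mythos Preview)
Your proof is correct and follows essentially the same strategy as the paper's: embed the smaller population into the larger one via an injection (the paper phrases this as partitioning $V'$ into $V'_1$ with $|V'_1|=n$ and $V'_2$), simulate any execution of $A$ on $V$ inside $V'$ by never scheduling the extra agents, and conclude that every node, edge, and label of $R_{A,V}$ appears in $R_{A,V'}$. Your version is considerably more explicit than the paper's---in particular you spell out why the $F$-labels agree and why the initial nodes coincide (via their dependence only on $q_0$, $\delta$, and the input symbol $\sigma\in X$), whereas the paper compresses all of this into a few lines---but the underlying idea is the same.
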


\begin{proof}
Indeed, let $V^{\prime}_1$, $V^{\prime}_2$ be a partitioning of $V^{\prime}$ such that $V^{\prime}_1=V$, and observe the agent configuration graph that yielded by the execution of $A$ in $V^{\prime}_1$. Since both populations execute the same protocol $A$, and $V^{\prime}_1=V$, $R_{A,V}=R_{A,V^{\prime}_1}$. Moreover, since the initial nodes are the same for both populations, they must be in $R_{A,V^{\prime}_1}$. Finally, $R_{A,V^{\prime}_1}$ is a subgraph of $R_{A,V^{\prime}}$, as $V^{\prime}_1 \subset V^{\prime}$, and the proof is complete.
\qed
\end{proof}

The above theorem states that while we explore populations of greater size, the corresponding agent configuration graphs are only enhnanced with new nodes and edges, while the old ones are preserved.

Given an agent configuration graph, we associate each node $a$ with a value $r(a)$ inductively, as follows:

\begin{description}
  \item[Base Case] For any initial node $a$, $r(a)=r_{init}=1$.
  \item [Inductive Step] For any other node $a$, $r(a)=min(r(b)+r(c))$ such that $a$ is reachable from $b$ through an edge labeled as $c$, and $b$, $c$ have already been assigned an $r$ value.
\end{description}

\begin{lemma}\label{lem:rval}
Let $R_{A,V}=\{U,W,F\}$ be an agent configuration graph. Every node in $R_{A,V}$ get associated with an $r$ value.
\end{lemma}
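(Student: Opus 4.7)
The plan is to introduce an auxiliary nonnegative integer $d(a)$ for each $a\in U$, namely the minimum, over all finite executions of $A$ on $V$, of the number of interactions that precede the first appearance of $a$ as the configuration of some agent with working flag $0$. The very definition of $U$ guarantees that such an execution exists, so $d(a)$ is a well-defined nonnegative integer, and $d(a)=0$ exactly when $a$ is an initial node. I would then prove by strong induction on $d(a)$ that every $a\in U$ receives an $r$-value.

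The base case $d(a)=0$ is immediate from the definition, which assigns $r(a)=r_{init}=1$ to every initial node. For the inductive step, fix $a$ with $d(a)=k>0$, and assume that every $u\in U$ with $d(u)<k$ has already been assigned $r(u)$. I would pick any execution witnessing $d(a)=k$ and look at its $k$-th interaction, the one that first brings some agent $\alpha$ into configuration $a$. Let $b$ be $\alpha$'s configuration (flag $0$) immediately before this interaction, and let $c$ be the corresponding configuration of its partner. By the definitions of $W$ and $F$, there is an edge $(b,a)\in W$ with $F((b,a))=\{c,x\}$, where $x\in\{i,r\}$ is the role played by $\alpha$. Because the execution is of minimum length, $a$ is absent from the population during the first $k-1$ steps, so in particular $b\neq a$ and $c\neq a$, and both $b$ and $c$ appear as agent configurations strictly before step $k$, giving $d(b),d(c)<k$. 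The inductive hypothesis supplies $r(b)$ and $r(c)$, so the set of candidates in the definition of $r(a)$ is nonempty and its minimum is $r(a)$.

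The delicate point is verifying that the pair $(b,c)$ extracted from a minimum-length witnessing execution really has strictly smaller $d$-value than $a$. Without minimality, one could not rule out the possibility that $b$ or $c$ coincides with $a$, which would prevent the induction from making progress. With minimality in hand, however, the fact that $a$ has not yet appeared during the first $k-1$ steps forces $b,c\in U\setminus\{a\}$ to have been already present, and the prefix of the witnessing execution up to their first occurrence certifies that $d(b),d(c)\le k-1$. The rest of the argument is routine bookkeeping.
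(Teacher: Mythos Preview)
Your proof is correct and follows a somewhat different path from the paper's. The paper argues by contradiction: it posits a nonempty set $U'\subseteq U$ of nodes that never receive an $r$-value, observes that every edge from $B=U\setminus U'$ into $U'$ must carry a label lying in $U'$ (otherwise the target would acquire an $r$-value), and then looks at the first moment in some execution at which any agent enters a configuration of $U'$; both its previous configuration and its partner's configuration would then already lie in $U'$, a contradiction. Your strong induction on the explicit depth $d(a)$ is a direct version of the same idea and has the pleasant side effect of producing a concrete rank on $U$.

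Two small remarks on precision. First, the absence of $a$ before its own first appearance is tautological and does not use minimality; what the minimality of $k$ actually buys you is that the last interaction involving $\alpha$ before it reaches $a$ is genuinely the $k$-th one (if it were the $m$-th with $m<k$, one could discard interactions $m+1,\dots,k$ and let $\alpha$ complete its internal computation, yielding $d(a)\le m<k$), and hence that the prefix witnessing $d(b),d(c)\le k-1$ really has at most $k-1$ interactions. Second, since an interaction sets both participants' working flags to $1$, it is the subsequent internal transitions rather than the interaction itself that carry $\alpha$ into $a$; saying this explicitly would tighten the phrase ``the interaction that first brings $\alpha$ into configuration $a$.''
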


\begin{proof}
Assume for the shake of the contradiction that there is a maximum, non empty set of nodes $U^{\prime} \subset U$ such that $\forall v\in U^{\prime}$, $v$ does not get associated with an $r$ value. Then $B=U-U^{\prime}$, and $C=(B,U^{\prime})$ defines a cut, with all the initial nodes being in $B$. We examine any edge $(u,v)$ labeled as $(w,x)$ that crosses the cut. Obviously $u\in B$ and $v\in U^{\prime}$, and $u$ is associated with a value $r(u)$. Since $v$ is not associated with any $r$ value, the same must hold for node $w$ (otherwise $r(v)=r(u)+r(w)$). We now examine the first agent $c$ that enters a configuration corresponding to some $v\in U^{\prime}$. Because of the above observation, this could only happen through an interaction with an agent being in a configuration that is also in $U^{\prime}$ which creates the contradiction.
\qed
\end{proof}

Note that for any given protocol and population size, the $r$ values are \emph{unique} since the agent configuration graph is unique. The following lemma captures a bound in the $r$ values when the corresponding protocol uses $f(n)=o(\log \log n)$ space.

\begin{lemma}\label{lem:rmax}
Let $r_{max-i}$ be the $i-$th greatest $r$ value associated with any node in an agent configuration graph. For any protocol $A$ that uses $f(n)=o(\log \log n)$, there exists a $n_0$ such that for any population of size $n>n_0$ $r_{max}<\frac{n}{2}$.
\end{lemma}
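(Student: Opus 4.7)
The plan is to bound $r_{max}$ by two separate estimates: first an upper bound on the size of the agent configuration graph in terms of the per-agent memory $f(n)$, and second an upper bound on the growth rate of $r$ values in terms of $|U|$. Combining them, $r_{max}$ will be seen to grow sub-polynomially in $n$, so the bound $r_{max} < n/2$ will hold for sufficiently large $n$.

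First, I would count $|U|$. An agent configuration with working flag $0$ consists of the state (from the finite set $Q$), the contents of four tapes (each of length $\mathcal{O}(f(n))$ over the finite alphabet $\Gamma$), and four head positions (each in the range $\mathcal{O}(f(n))$). Hence the total number of distinct agent configurations is at most $|Q| \cdot (|\Gamma|^{\mathcal{O}(f(n))} \cdot \mathcal{O}(f(n)))^4 = 2^{\mathcal{O}(f(n))}$. Since $f(n) = o(\log\log n)$, we obtain the estimate $|U| \leq 2^{o(\log\log n)} = (\log n)^{o(1)}$.

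Next, I would prove by induction on the order in which $r$ values are assigned that $r_{max} \leq 2^{|U|-1}$. Order the nodes $a_1, a_2, \ldots, a_{|U|}$ so that each $a_k$ (for $k \geq 2$) is assigned its $r$ value only after $a_1, \ldots, a_{k-1}$ (this is possible by Lemma \ref{lem:rval}, which guarantees that all values are assignable). Let $M_k = \max_{i \leq k} r(a_i)$. Base case: $M_1 = 1$, since initial nodes receive $r_{init} = 1$. Inductive step: by definition, $r(a_{k+1}) = \min(r(b) + r(c))$ over valid pairs of already-assigned nodes; picking any single such pair already gives $r(a_{k+1}) \leq 2 M_k$, so $M_{k+1} \leq 2 M_k$. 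Unrolling, $M_{|U|} \leq 2^{|U|-1}$, which is exactly $r_{max}$.

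Finally I combine the two estimates: $r_{max} \leq 2^{|U|-1} \leq 2^{(\log n)^{o(1)}}$. Since $(\log n)^{o(1)}$ grows asymptotically slower than $\log n - 1$, there exists some $n_0$ such that for all $n > n_0$ we have $(\log n)^{o(1)} \leq \log n - 1$, whence $r_{max} \leq 2^{\log n - 1} = n/2$, as required. The main obstacle is justifying that the inductive doubling bound in the second step is valid despite the structural constraint built into the definition of $r$ (namely, that $b$ and $c$ must form a valid incoming labeled edge for $a$); however, Lemma \ref{lem:rval} ensures at least one such pair exists for every non-initial node, and we only need any single pair to obtain the upper bound $r(a_{k+1}) \leq 2 M_k$, so no sharper combinatorics is required.
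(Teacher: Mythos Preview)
Your proposal is correct and follows essentially the same approach as the paper: bound $|U|$ by $2^{\mathcal{O}(f(n))}$, then bound $r_{max}$ by a doubling argument giving $r_{max}\le 2^{|U|}$, and finally combine the two to conclude $r_{max}<n/2$ for large $n$. The only cosmetic differences are that the paper phrases the doubling step as a chain through the distinct $r$ values ($r_{max}\le 2\,r_{max-1}\le\cdots$) rather than an induction on assignment order, and writes the final asymptotic as $2^{2^{f(n)}}<2^{(\log n)/M}=\sqrt[M]{n}$ rather than your $2^{(\log n)^{o(1)}}$; your treatment is in fact slightly more careful in accounting for states and head positions in the $|U|$ bound.
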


\begin{proof}
Since $f(n)=o(\log \log n)$, $\lim_{n \to \infty}\frac{f(n)}{\log \log n}=0$, so $\lim_{n \to \infty}\frac{\log \log n}{f(n)} = \infty$ and $\lim_{n \to \infty}\frac{\log n}{2^{f(n)}}=\infty$. It follows from the last equation that for any positive integer $M$, there exists a fixed $n_0$ such that $\frac{\log n}{2^{f(n)}}>M$ for any $n> n_0$.

Fix any such $n$ and let $k=|U|\leq 2^{f(n)}$ in the corresponding agent configuration graph. Since any node is associated with an $r$ value, there can be at most $k$ different such values. Now observe that $r_{max}\leq 2\cdot r_{max-1} \leq \cdots \leq 2^k\cdot r_{init}\leq 2^{2^f(n)}< 2^{\frac{\log n}{M}}\leq \sqrt[M]{n} \leq \frac{n}{2}$ for $n>max(n_0,2)$ and $M\geq 2$.
\qed
\end{proof}

The following lemma proves that the $r$ values correspond to some reachability properties in the agent configuration graph.
\begin{lemma}\label{lem:Qprop}
Let $Q(a)$ be the following property: \emph{Given a node $a$ in an agent configuration graph, there exists a population of size $r(a)$ and a fair execution of the corresponding protocol that will lead an agent to the configuration $a$}. $Q(a)$ holds for any node of the agent configuration graph.

\end{lemma}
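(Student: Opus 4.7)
The plan is to prove Lemma~\ref{lem:Qprop} by induction on the order in which $r$ values are assigned by the inductive definition (equivalently, on the integer $r(a)$, once one verifies that both summands in the recursive step are strictly smaller than $r(a)$). I would phrase $Q(a)$ slightly more strongly, adding that the finite execution witnessing $Q(a)$ can use \emph{only} the agents of the constructed population of size $r(a)$, with the distinguished agent ending in configuration $a$ after a finite prefix; fairness can then be recovered by extending this prefix arbitrarily.

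For the base case, a node with $r(a)=r_{\text{init}}=1$ is by definition an initial node, so it corresponds to some input symbol $\sigma\in X$. I take a population of size one whose only agent receives $\sigma$; its internal computation brings it to the initial ready-to-interact configuration $a$, and since no interaction is possible the resulting execution (extended trivially) is fair. For the inductive step, if $r(a)>1$ then by the definition of $r$ there is an edge $(b,a)\in W$ with $F((b,a))=(c,x)$ and $r(a)=r(b)+r(c)$, and both $r(b),r(c)<r(a)$. By the inductive hypothesis I obtain a population $V_b$ of size $r(b)$ together with a finite fair-extendable execution prefix $E_b$ driving some agent $u\in V_b$ into configuration $b$, and analogously a population $V_c$ of size $r(c)$ with execution $E_c$ and distinguished agent $v\in V_c$ reaching $c$. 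I form the disjoint union $V=V_b\sqcup V_c$ of size $r(a)$ and keep the inputs each agent already had; then I construct a new execution on $V$ by first replaying $E_b$ using only pairs from $V_b\times V_b$, then replaying $E_c$ using only pairs from $V_c\times V_c$ (which leaves $u$ undisturbed in configuration $b$), and finally scheduling the interaction $(u,v)$ or $(v,u)$ so that $u$ plays role $x$. By the definition of the edge $(b,a)$ with label $(c,x)$, this last interaction takes $u$ into configuration $a$, as required.

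The main subtlety to defend is that ``replaying $E_b$ inside $V$'' really reproduces the same agent configurations. This is where I invoke uniformity together with Theorem~\ref{the:subgraph}: the internal and external transition functions $\delta,\gamma$ are independent of the population size, so as long as agents in $V_b$ interact only among themselves the local dynamics in $V$ are indistinguishable from those in $V_b$; Theorem~\ref{the:subgraph} guarantees that the reachable agent configurations of $V_b$ embed into those of $V$ and that initial nodes coincide, so the inputs chosen for the agents of $V_b$ still yield the same initial configurations inside $V$. The only real obstacle is checking that the scheduler is allowed to keep the two sub-populations isolated for arbitrarily long finite prefixes before triggering the $u$-$v$ interaction, but this is permitted: fairness is a property of the full infinite execution, not of any finite prefix, and any finite prefix can be completed to a fair infinite execution by a standard round-robin scheduling of all reachable pairs afterwards. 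This closes the induction and establishes $Q(a)$ for every node.
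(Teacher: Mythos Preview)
Your proof is correct and follows the same inductive approach as the paper's own argument: both proceed by induction on the $r$-values, handle initial nodes as the base case, and in the inductive step split a population of size $r(a)=r(b)+r(c)$ into two independent subpopulations that separately realize configurations $b$ and $c$ before the final interaction. Your version is simply more explicit than the paper's (which says only that the two configurations can be reached ``independently''): you spell out the disjoint-union construction, the replay of the two sub-executions, and the extension to a fair infinite execution, and you cite Theorem~\ref{the:subgraph} to justify that the replayed prefixes behave identically inside the larger population---a point the paper leaves implicit.
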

\begin{proof}
We prove the above lemma by generalized induction in the $r$ values.
\begin{description}
  \item[Base Case:] $Q(a)$ obviously holds for any initial node $u$, since $r_{init}=1$.
  \item[Inductive Step:] We examine any non initial node $u$ that has been associated with a value $r(u)=r(a)+r(b)$, for some $a$, $b$. The inductive hypothesis guarantees that $Q(a)$ and $Q(b)$ hold. Then, a population of size $r(a)+r(b)$ can lead two agents to configurations $a$ and $b$ independently. Then an interaction between those agents will take one of them to configuration $u$, so $Q(u)$ holds too.
\end{description}
\qed
\end{proof}

Lemmata \ref{lem:rmax} and \ref{lem:Qprop} lead to the following:
\begin{lemma}\label{lem:inter}
For any protocol $A$ that uses $f(n)=o(\log \log n)$ there exists a fixed $n_0$ such that for any population of size $n>n_0$ and any pair of agent configurations $u,v$, the interaction $(u,v)$ can occur.
\end{lemma}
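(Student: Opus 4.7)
The plan is to leverage Lemmas \ref{lem:rmax} and \ref{lem:Qprop} together with Theorem \ref{the:subgraph} to realize the desired interaction by running two independent subcomputations on disjoint subpopulations and then scheduling the critical encounter at the very end. Take the constant $n_0$ furnished by Lemma \ref{lem:rmax} and fix any population $V$ of size $n>n_0$. For any two nodes $u,v$ of $R_{A,V}$ one then has $r(u),r(v)\leq r_{max}<n/2$, so $r(u)+r(v)<n$; in particular, even in the degenerate case $u=v$, one also gets $2r(u)<n$.

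First I would partition $V$ into disjoint subsets $V_1,V_2,V_3$ of sizes $r(u)$, $r(v)$ and $n-r(u)-r(v)\ge 0$, respectively. By Lemma \ref{lem:Qprop}, an execution $E_u$ of $A$ on a population of size $r(u)$ exists which drives some agent into configuration $u$, and similarly an execution $E_v$ on a population of size $r(v)$ leads some agent to $v$. Theorem \ref{the:subgraph} guarantees that the agent configuration graph when $A$ runs with interactions confined to $V_1$ coincides with the one for any fresh population of size $r(u)$, so the adversary can realize $E_u$ verbatim inside $V$ by scheduling only ordered pairs within $V_1$, and likewise $E_v$ inside $V_2$. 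These two segments act on disjoint agent sets, so their concatenation is a valid execution prefix of $A$ on $V$ that leaves the agents of $V_3$ untouched.

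At the end of this prefix, some $a_u\in V_1$ is in configuration $u$ and some $a_v\in V_2$ is in configuration $v$, both with working flag $0$ by the very definition of $U$. The scheduler is then free to pick the ordered pair $(a_u,a_v)$ as the next encounter (the communication graph is complete), which realizes the interaction between an agent in configuration $u$ and one in configuration $v$, as the lemma demands. If $u=v$, the same argument goes through using two disjoint copies of a population of size $r(u)$ inside $V$, which fit together since $2r(u)<n$.

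The main obstacle, as I see it, is the lifting step: the subexecutions $E_u,E_v$ produced by Lemma \ref{lem:Qprop} for abstract populations of the right sizes must be replayable inside $V$ without the additional agents of $V$ interfering. Theorem \ref{the:subgraph} is precisely what licenses this move, since the agent configuration graphs for smaller populations embed into those for larger ones, so every local trajectory achievable by a small population remains achievable when extra idle agents are present, provided the scheduler confines itself to the intended subpopulation. Once this embedding is in hand, the remainder is just bookkeeping about scheduling disjoint subpopulations in sequence.
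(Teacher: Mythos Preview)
Your argument is correct and follows essentially the same route as the paper's own proof: use Lemma~\ref{lem:rmax} to bound $r(u),r(v)<n/2$, then invoke Lemma~\ref{lem:Qprop} on disjoint subpopulations to produce agents in configurations $u$ and $v$ independently, and finally schedule the encounter. Your version is in fact more carefully written---the paper simply partitions into two halves of size $n/2$ and leaves the lifting step (that executions on small populations can be replayed inside larger ones) implicit, whereas you make this explicit via Theorem~\ref{the:subgraph} and also treat the diagonal case $u=v$---but the underlying idea is the same.
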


\begin{proof}
Indeed, because of the lemma \ref{lem:rmax}, there exists a $n_0$ such that for any $n>n_0$, $r(a)< \frac{n}{2}$ for any $a$. With that in mind, lemma \ref{lem:Qprop} guarantees that in any such population any interaction $(u,v)$ since any of the agent configurations $u$, $v$ can occur independently, by partitioning the population in two subpopulations of size $\frac{n}{2}$ each.
\qed
\end{proof}

We can complete our proof with the following theorem:

\begin{theorem}\label{the:semil}
Any PM protocol $A$ that uses $f(n)=o(\log \log n)$ can only compute semilinear predicates.
\end{theorem}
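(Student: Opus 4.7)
My plan is to reduce the analysis of protocol $A$ on sufficiently large populations to that of a standard population protocol, and then appeal to the Angluin \emph{et al.}\ characterization of PPs as computing exactly the semilinear predicates. First, by Lemma~\ref{lem:inter}, there exists a threshold $n_0$ such that for any population of size $n > n_0$ and any pair of agent configurations $u, v$ appearing in $R_{A,V}$, the interaction $(u,v)$ is realizable in some fair execution. In other words, on populations of size $n > n_0$ there are no ``hidden'' scheduling restrictions: protocol $A$ acts exactly like a standard population protocol whose state set is $U_n$ and whose transition rule is $\gamma$, just as in the setting of \cite{AADFP06}.

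Next, I would exploit the nested structure guaranteed by Theorem~\ref{the:subgraph}. As $|V|$ grows, the agent configuration graphs form a monotonically increasing chain in which the initial nodes (one per input symbol in $X$) and their outgoing transition structure are preserved. This lets me speak of a consistent limit object $U_\infty = \bigcup_n U_n$ with a well-defined transition function, and observe that for any fixed input assignment of size $n$ the reachable set lies entirely inside the finite $U_n$. Because initial states are drawn from the fixed finite input alphabet $X$, the set of reachable configurations from any input vector is determined by a finite portion of this limit.

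The main step, and the main obstacle, is to conclude semilinearity in the presence of a state space $|U_n| = 2^{O(f(n))} = (\log n)^{o(1)}$ that technically grows with $n$. Classically, the Angluin \emph{et al.}\ theorem assumes a fixed finite state set, so I would argue by a pumping/compactness style contradiction: if the predicate $p$ were not semilinear, then arbitrarily large inputs would have to exhibit behavior that depends genuinely on states in $U_n \setminus U_{n_0}$, yet such ``new'' states can only be entered through chains of interactions seeded by the fixed input states in $X$. Using the symmetry of $p$ (Remark~\ref{rem:sym}), together with Lemma~\ref{lem:Qprop} which links each configuration to a minimum population size required to realize it, I would pump input multiplicities to produce two input vectors that the limit protocol cannot distinguish — they reach configurations indistinguishable from the point of view of any fair execution on $U_{n_0}$ — yet on which $p$ must take different values, contradicting stable computation. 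This would force the effective protocol to live inside a fixed finite subset of $U_\infty$, at which point invoking the Angluin \emph{et al.}\ characterization yields that $p$ is semilinear (after absorbing the finitely many inputs of size $\le n_0$, which preserves semilinearity).
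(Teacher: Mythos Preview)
Your proposal has a genuine gap: you treat the possibility that the agent-configuration set $U_n$ ``technically grows with $n$'' as the main obstacle and then sketch a vague pumping/compactness argument to overcome it. But the whole point of the lemmas you cite is that this growth \emph{does not happen}. Fix any $n>n_0$ (from Lemma~\ref{lem:inter}) and any $n'>n$. By Theorem~\ref{the:subgraph}, $R_{A,V}$ sits inside $R_{A,V'}$ and contains all initial nodes. Suppose some configuration $k\in U_{n'}\setminus U_n$ appears; take the first such $k$ to occur in any execution on $V'$. Since $k$ is not initial, it is produced by an interaction $(u,v)$ with $u,v\in U_n$ (minimality of $k$). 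But Lemma~\ref{lem:inter} says that in the size-$n$ population every pair $(u,v)$ of configurations in $U_n$ can actually interact, so $k$ would already belong to $U_n$, a contradiction. Hence $U_{n'}=U_n$ for all $n'>n>n_0$: the agent configuration graph \emph{stabilizes} to a single finite object past the threshold.

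Once you have this, the reduction to a population protocol is immediate and clean --- there is a fixed finite state set and a fixed transition relation governing the protocol on all sufficiently large populations, so the Angluin \emph{et al.}\ characterization applies directly. Your proposed pumping argument is therefore unnecessary, and as written it is also not a proof: ``produce two input vectors that the limit protocol cannot distinguish'' is exactly the hard part of the Angluin--Aspnes--Eisenstat theorem, and you have not said how Lemma~\ref{lem:Qprop} or symmetry would let you carry it out in the presence of an (allegedly) growing $U_n$. The missing idea is the one-line stabilization step above; with it, the rest follows without any new combinatorics.
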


\begin{proof}
Because of the uniformity constraint, $A$ can be executed in any population of arbitrary size. We choose a fixed $n_0$ as defined in lemma \ref{lem:rmax} and examine any population $L$ of size $n>n_0$. Let $R_{A,L}$ be the corresponding agent configuration graph. Let $L^{\prime}$ be any population of size $n^{\prime}>n$ and $R_{A,L^{\prime}}$ the corresponding agent configuration graph. Because of the theorem \ref{the:subgraph}, $R_{A,L^{\prime}}$ contains a subgraph $K$, such that $K=R_{A,L}$, and the initial nodes of $R_{A,L^{\prime}}$ are in $K$. Let $U^*= U^{\prime}-U$, and $k$ the first agent configuration that appears in $L^{\prime}$ such that $k\in U^*$ through an interaction $(u,v)$($k$ can't be an initial configuration, thus it occurs through some interaction). Then $u,v\in U$, and the interaction $(u,v)$ can occur in the population $L$ too (lemma \ref{lem:inter}), so that $k\in U$, which refutes our choice of $k$ creating a contradiction. So, $U^*=\emptyset$, and the set of agent configurations does not change as we examine populations of greater size. Since the set of agent configurations remains unchanged, the corresponding predicate can be computed by the Population Protocol model, thus it is semilinear.
\qed
\end{proof}

In practice, the above theorem guarantees that for any protocol that uses only $f(n)=o(\log \log n)$ space in each agent, there exists a population of size $n_0$ in which it stops using extra space. Since $n_0$ is fixed, we can construct a protocol based on the agent configuration graph which uses constant space\footnote{Notice that the agent configuration graph can be viewed as a deterministic finite automaton.}, and thus can be executed in the Population Protocols Model.

So far, we have stablished that when $f(n)=o(\log \log n)$, $PMSPACE(f(n))\subseteq SEMILINEAR$. Since the inverse direction holds trivially, we can conclude that $PMSPACE(f(n)) = SEMILINEAR$.

\subsection{The Logarithmic Predicate}

We will now present the non semilinear logarithmic predicate, and devise a PM protocol that computes it using $O(\log\log n)$ space in each agent.

We define the logarithmic predicate as follows: During the initialization, each agent receives an input symbol from $X=\{a,0\}$, and let $N_a$ denote the number of agents that have received the symbol $a$. We want to compute whether $\log N_a = t$ for some arbitrary $t$. We give a high level protocol that computes this predicate, and prove that it can be correctly executed using $O(\log\log n)$ space.

Each agent $u$ maintains a variable $x_u$, and let $out_u$ be the variable that uses to write its output. Initially, any agent $u$ that receives $a$ as his input symbol sets $x_u=1$ and $out_u=1$, while any other agent $v$ sets $x_v=0$ and $out_v=1$.

Our main protocol consists of two subprotocols, $A$ and $B$, that are executed concurrently. Protocol $A$ does the following: whenever an interaction occurs between two agents, $u$, $v$, with $u$ being the initiator, if $x_u=x_v>0$, then $x_u=x_u+1$ and $x_v=0$. Otherwise, nothing happens. Protocol $B$ runs in parallel, and computes the semilinear predicate of determining whether there exist two or more agents having $x>0$. If so, it outputs $0$, otherwise it outputs $1$. Observe that $B$ is executed on \emph{stasbilizing inputs}, as the $x-$variables fluctate before they stabilize to their final value. However, it is well known that the semilinear predicates are also computable under this constraint (\cite{AAE06}).

\begin{lemma}
The main protocol uses $\mathcal{O}(\log \log n)$ space.
\end{lemma}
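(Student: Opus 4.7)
The plan is to bound the maximum value that any variable $x_u$ can attain during an execution, and then observe that this bound determines the memory cost of each agent. Protocol $B$ is semilinear and uses only constant space per agent, so the whole argument reduces to analyzing the growth of the $x$-variables maintained by protocol $A$.

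First, I would identify a conserved quantity. Define $\Phi(C) = \sum_{u \in V,\, x_u > 0} 2^{x_u - 1}$ for every reachable population configuration $C$. Initially, every agent that received input $a$ has $x_u = 1$ and every other agent has $x_u = 0$, so $\Phi$ equals $N_a$. The only rule of $A$ that changes any $x$-variable takes two agents with $x_u = x_v = k > 0$, and updates them to $k+1$ and $0$; this transformation contributes $2^k$ to $\Phi$ after, whereas before it contributed $2^{k-1} + 2^{k-1} = 2^k$. Hence $\Phi$ is an invariant of the execution.

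Because $\Phi \equiv N_a$, for every agent $u$ with $x_u > 0$ we obtain $2^{x_u - 1} \leq \Phi = N_a \leq n$, and therefore $x_u \leq 1 + \lfloor \log_2 n \rfloor$ throughout the computation. Writing $x_u$ in binary costs $\mathcal{O}(\log \log n)$ cells on the working tape, as does storing it in the outgoing message tape. Protocol $B$ computes a semilinear predicate (whether at least two agents have $x > 0$), so its state can be stored in $\mathcal{O}(1)$ space; the stabilizing-inputs result of \cite{AAE06} ensures the constant-space simulation remains valid. Running $A$ and $B$ concurrently therefore costs $\mathcal{O}(\log \log n)$ space per agent, which is what the lemma asserts.

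The main thing to double-check is the conservation claim, since every other step is a direct bit-length calculation. Once the invariant $\Phi \equiv N_a$ is in place, the logarithmic bound on $x_u$ is immediate and the space accounting is straightforward.
\qed
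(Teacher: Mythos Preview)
Your proposal is correct and follows essentially the same approach as the paper: both arguments bound $x_u$ by showing that a value $k$ can only arise after ``consuming'' exponentially many initial $a$-tokens, yielding $x_u = \mathcal{O}(\log n)$ and hence $\mathcal{O}(\log\log n)$ bits. Your invariant $\Phi = \sum_{x_u>0} 2^{x_u-1}$ is simply a clean global packaging of the paper's inductive claim that reaching $x_u = k$ requires $2^{k-1}$ agents that started with $x=1$; the two formulations are equivalent, though yours is arguably tidier and avoids the paper's minor off-by-one (it writes $2^k$ where $2^{k-1}$ is meant).
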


\begin{proof}
As protocol $B$ computes a semilinear predicate, it only uses $O(c)$ space, with $c$ being a constant. To examine the space bounds of $A$, pick any agent $u$. We examine the greatest value that can be  assigned to the variable $x_u$. Observe that in order for $x_u$ to reach value $k$, there have to be at least $2$ pre-existsing $x-$ variables with values $k-1$. Through an easy induction, it follows that there have to be at least $2^k$ pre-existing variables with the value $1$. Since $2^k\leq N_a$, $k\leq \log N_a \leq \log n$, so $x_u$ is upper bounded by $\log n$, thus consuming $\mathcal{O}(\log \log n)$ space.
\qed
\end{proof}

\begin{lemma}{\label{lem:log_com}}
For every agent $u$, eventually $out_u=1$ if $\log N_a=t$ for some arbitrary $t$, and $out_u=0$ otherwise.
\end{lemma}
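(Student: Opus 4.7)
The proof plan hinges on a conservation law for Protocol~$A$ together with a fairness argument that pins down the terminal multiset of $x$-values. First I would define the potential
$$\Phi(C)=\sum_{u\,:\,x_u>0}2^{\,x_u-1}$$
and check that every application of $A$ preserves $\Phi$: when two agents with $x_u=x_v=k>0$ interact, the two summands $2^{k-1}+2^{k-1}$ collapse into one summand $2^{k}$ plus a zero. Since initially each of the $N_a$ agents that received $a$ contributes $2^{0}=1$, we get $\Phi(C)=N_a$ throughout any execution of~$A$.

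Next I would argue that the $x$-variables eventually freeze. Let $\mu(C)=|\{u:x_u>0\}|$. Every effective $A$-interaction strictly decreases $\mu$ by one, while $\mu\ge 1$ as long as $\Phi=N_a\ge 1$; hence only finitely many effective $A$-interactions can ever occur, after which no $x_u$ ever changes again. In this terminal configuration the positive $x$-values must be pairwise distinct, for otherwise two agents sharing a common positive value would, by fairness, eventually interact and trigger a merge, contradicting stability. Combined with $\Phi=N_a$, distinctness implies that the multiset of final positive $x$-values is exactly the set of exponents in the unique binary expansion of $N_a$. In particular, the number of agents with $x>0$ in the stable configuration equals the number of $1$'s in the binary representation of $N_a$, and this number is $1$ iff $N_a$ is a power of two, i.e.\ iff $\log N_a=t$ for some $t\in\mathbb{Z}_{\ge 0}$.

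Finally I would invoke Protocol~$B$. The $x$-variables form stabilizing inputs for $B$, and the semilinear predicate ``at least two agents have $x>0$'' is stably computable on stabilizing inputs by the result of \cite{AAE06}. Hence every agent's $B$-output, which is written to $out_u$, eventually stabilizes to $0$ when two or more agents end up with $x>0$ and to $1$ otherwise. Combining with the previous paragraph, $out_u$ stabilizes to $1$ precisely when $N_a$ is a power of two, and to $0$ otherwise, which is the desired statement.

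The main obstacle is the stabilization-and-distinctness claim for Protocol~$A$: one must rule out the possibility that two agents with equal positive $x$-values coexist forever, and this is where the fair-scheduling hypothesis is essential (any such pair is enabled for a merge in the configuration graph, hence must interact infinitely often). Everything else reduces to the conservation law for $\Phi$ and the standard fact that semilinear predicates are stably computable on stabilizing inputs.
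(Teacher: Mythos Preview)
Your proposal is correct and follows essentially the same approach as the paper: both arguments rest on the invariant $\sum_{u:x_u>0}2^{x_u-1}=N_a$, the observation that in a stable configuration the surviving positive $x$-values are pairwise distinct (else fairness would force another merge), and the uniqueness of binary representation to conclude that exactly one positive $x$-value survives iff $N_a$ is a power of two. Your presentation is in fact tidier than the paper's, since you make the conservation law explicit via the potential $\Phi$ and give a clean termination argument through the monovariant $\mu$, whereas the paper leaves the stabilization of the $x$-variables implicit and has an off-by-one slip in the exponent.
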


\begin{proof}
Indeed, the execution of protocol $B$ guarantees that all agents will set $out=1$ iff eventually there exists only one agent $u$ that has a non-zero value assigned in $x_u$. Assume that $x_u=k$ for some $k$. Then, because of the analysis of lemma \ref{lem:log_com} during the initialization of the population will exist $2^k$ $x-$variables set to $1$. Since each of those variables corresponds to one $a$ assignement, $N_a=2^k\Rightarrow \log N_a = k$. On the other hand, if the answer of the protocol is $0$ then it means that there are $t > 1$ agents in the population with $x-$variables set to different values $x_1, x_2, \cdots, x_t$ otherwise they could have effective interactions with each other. Therefore, there should have initially existed $2^{x_1 - 1} + 2^{x_2 - 1} + \cdots + 2^{x_t - 1}$ agents with input $a$. This, however, means that $N_a \neq 2^k$ for any $k$ since each number can be uniquely expressed as a sum of powers of $2$. Thus the protocol correctly outputs $0$. 
\qed
\end{proof}

Thus, we have presented a non-semilinear predicate that can be computed by a PM protocol using $\mathcal{O}(\log \log n)$ space.
Combining this result with the one presente in the previous subsection, we obtain the following theorem:

\begin{theorem}[\emph{Threshold Theorem}]\label{the:logsemi}
 $SEMILINEAR = PMSPACE(f(n))$ when $f(n)=o(\log \log n)$. Moreover, $SEMILINEAR \subsetneq PMSPACE(f(n))$ when $f(n)=\mathcal{O}(\log \log n)$.
\end{theorem}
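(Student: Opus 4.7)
My plan is to observe that the theorem is really an assembly of the two results that have already been developed in the preceding subsections, so the proof amounts to packaging them carefully.

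For the equality $SEMILINEAR = PMSPACE(f(n))$ when $f(n)=o(\log\log n)$, I would handle the two inclusions separately. The inclusion $SEMILINEAR \subseteq PMSPACE(f(n))$ is immediate: every semilinear predicate is computable by a constant-space population protocol (by the characterization of Angluin \emph{et al.}), and a constant-space protocol trivially uses $\mathcal{O}(f(n))$ space for any function $f$, in particular for any $f(n)=o(\log\log n)$. The reverse inclusion $PMSPACE(f(n)) \subseteq SEMILINEAR$ is exactly the content of Theorem \ref{the:semil}, which was established through the agent configuration graph machinery (Lemmata \ref{lem:rmax}, \ref{lem:Qprop}, \ref{lem:inter}). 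Combining these two gives the equality.

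For the strict inclusion $SEMILINEAR \subsetneq PMSPACE(f(n))$ when $f(n)=\mathcal{O}(\log\log n)$, I would invoke the logarithmic predicate construction from the preceding subsection. That subsection exhibits a PM protocol computing $(\log N_a = t)$ using $\mathcal{O}(\log\log n)$ space in each agent (established by bounding the maximum value of the $x$-variable by $\log n$, and by the correctness lemma \ref{lem:log_com}). Hence this predicate lies in $PMSPACE(\log\log n) \subseteq PMSPACE(f(n))$ whenever $f(n)=\mathcal{O}(\log\log n)$. It remains to note that this predicate is not semilinear: semilinear predicates are precisely Presburger-definable, and $(\log N_a = t)$ encodes the relation ``$N_a$ is a power of $2$'', which is a classical example of a non-Presburger relation. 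Together with the first inclusion this yields the strict containment.

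Essentially no new mathematical content is needed; the only mild care is to note that $SEMILINEAR \subseteq PMSPACE(f(n))$ still holds in the $\mathcal{O}(\log\log n)$ case as well (again because constant space is absorbed into $\mathcal{O}(\log\log n)$), so that witnessing a single non-semilinear predicate inside $PMSPACE(f(n))$ is enough to make the containment strict. The main ``obstacle'' is really just citing the two preceding results in the right order, as all the heavy lifting — the agent configuration graph argument on one side and the doubling/power-of-two protocol on the other — has already been done.
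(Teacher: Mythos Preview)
Your overall approach is exactly what the paper does: the theorem is presented there simply as the combination of Theorem~\ref{the:semil} (giving $PMSPACE(f(n))\subseteq SEMILINEAR$ for $f(n)=o(\log\log n)$, with the reverse inclusion trivial) and the logarithmic-predicate construction (witnessing strictness at $\log\log n$), with no additional argument.

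There is, however, one step in your write-up that is wrong as stated. You claim that the logarithmic predicate ``lies in $PMSPACE(\log\log n) \subseteq PMSPACE(f(n))$ whenever $f(n)=\mathcal{O}(\log\log n)$.'' This inclusion goes the wrong way: $f(n)=\mathcal{O}(\log\log n)$ means $f$ is \emph{at most} of order $\log\log n$, so one only gets $PMSPACE(f(n)) \subseteq PMSPACE(\log\log n)$, not the reverse. Indeed, read literally the theorem's second clause is false (e.g.\ $f(n)\equiv 1$ is $\mathcal{O}(\log\log n)$ yet $PMSPACE(1)=SEMILINEAR$ by the first clause). The paper's own phrasing here is sloppy; what the construction actually establishes, and what you should write, is that the non-semilinear logarithmic predicate lies in $PMSPACE(\log\log n)$, so $SEMILINEAR \subsetneq PMSPACE(f(n))$ holds for $f(n)=\Omega(\log\log n)$ (or simply for the specific choice $f(n)=\log\log n$), which is the intended tightness statement.
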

\section{Conclusions - Future Research Directions} \label{sec:conc}

We proposed the PM model, an extension of the PP model \cite{AADFP06}, in which the agents are communicating TMs. We mainly focused on studying the computational power of the new model when the space used by each agent is bounded by a logarithm of the population size. Although the model preserves uniformity and anonymity, interestingly, we have been able to prove that the agents can \emph{organize themselves into a nondeterministic TM} that makes full use of the agents' total memory (i.e. of $\mathcal{O}(n\log n)$ space). The agents are initially identical and have no global knowledge of the system, but by executing an \emph{iterative reinitiation process} they are able to assign \emph{unique consecutive ids} to themselves and get informed of the population size. In this manner, we showed that $PLM$, which is the class of predicates stably computable by the PM model using $\mathcal{O}(\log n)$ memory, contains all symmetric predicates in $NSPACE(n\log n)$. Moreover, by proving that $PLM$ is a subset of $NSPACE(n\log n)$, we concluded that it is precisely equal to the class consisting of all symmetric predicates in $NSPACE(n\log n)$. We then generalized this to $SNSPACE(nf(n))$ when the agents use $\mathcal{O}(f(n))$ memory, for all space functions $f(n)=\Omega(\log n)$. We also explored the behavior of the PM model for space bounds $f(n)=o(\log n)$ and proved that $SEMILINEAR=PMSPACE(f(n))$ when $f(n)=o(\log\log n)$. Finally, we showed that this bound is tight, that is, $SEMILINEAR \subsetneq PMSPACE(f(n))$ when $f(n)=\mathcal{O}(\log \log n)$.

Many interesting questions remain open. Is the PM model \emph{fault-tolerant}? What preconditions are needed in order to achieve satisfactory fault-tolerance? What is the behavior of the model when the agents use $\mathcal{O}(f(n))$ memory, where $f(n) = o(\log n)$ and $f(n) = \Omega(\log\log n)$.

\newpage

\section*{Acknowledgements}

We wish to thank some anonymous reviewers for their very useful comments to a previous version of this work.





%
%
%
%

\end{document}